\documentclass{article}

\PassOptionsToPackage{hidelinks}{hyperref}

\usepackage[preprint]{neurips_2026}

\usepackage[utf8]{inputenc} 
\usepackage[T1]{fontenc}    
\usepackage{hyperref}       
\usepackage{url}            
\usepackage{booktabs}       
\usepackage{amsfonts}       
\usepackage{nicefrac}       
\usepackage{microtype,bbm}      
\usepackage{xspace}         
\usepackage[symbol]{footmisc}
\usepackage[table,xcdraw]{xcolor}
\usepackage[english]{babel}

\usepackage{subcaption}
\usepackage{algorithm}
\usepackage[noend]{algpseudocode}

\usepackage{amsmath}
\allowdisplaybreaks
\usepackage{amssymb}
\usepackage{mathtools}
\usepackage{amsthm}
\usepackage{thm-restate}
\usepackage[autostyle=true]{csquotes}
\usepackage{enumitem}
\usepackage{cleveref}

\newcommand{\xhdr}[1]{\vspace{1mm} \noindent{\bf #1}}

\newcommand{\cA}{\mathcal{A}}
\newcommand{\cD}{\mathcal{D}}
\newcommand{\cG}{\mathcal{G}}
\newcommand{\cH}{\mathcal{H}}
\newcommand{\cL}{\mathcal{L}}
\newcommand{\cM}{\mathcal M}
\newcommand{\cN}{\mathcal N}

\newcommand{\cS}{\mathcal{S}}
\newcommand{\cT}{\mathcal T}

\newcommand{\vs}{v_s}
\newcommand{\vb}{v_b}
\newcommand{\bs}{b_s}
\newcommand{\bb}{b_b}

\newcommand{\ind}[1]{\mathbbm{1}_{\{{#1\}}}}
\newcommand{\E}[1]{\mathbb{E}\left[#1\right]}

\renewcommand{\P}[1]{\mathbb{P}\left(#1\right)}
\newcommand{\Psub}[2]{\mathbb{P}_{#1}\left(#2\right)}

\newcommand{\eps}{\varepsilon}

\newcommand{\hedge}{\textsc{Hedge}\xspace}
\newcommand{\hier}{\textsc{Hier-Mech}\xspace}
\newcommand{\hhedge}{\textsc{HierHedge}\xspace}

\newcommand{\cMbl}{\cM_{\textsc{BT}}}
\newcommand{\cAja}{\cA_{\textsc{JA}}}
\newcommand{\cAbl}{\cA_{\textsc{BT}}}
\newcommand{\blM}{M_{\textsc{BT}}}
\newcommand{\jaM}{M_{\textsc{JA}}}
\newcommand{\blA}{A_{\textsc{BT}}}
\newcommand{\jaA}{A_{\textsc{JA}}}

\newcommand{\prof}{\textnormal{\textsf{profit}}}
\newcommand{\rev}{\textnormal{\textsf{rev}}}
\newcommand{\app}[2]{\phi_{#1}\left(#2\right)}
\newcommand{\best}{M^\star}
\newcommand{\bestapp}[1]{\phi^\star_{#1}}
\newcommand{\util}[3]{U_{#3}(#1, #2)}

\newcommand{\target}[2]{\theta_{#1}({#2})}
\newcommand{\fat}[2]{\text{\normalfont fat}_{#1}(#2)}

\newtheorem{definition}{Definition}
\newtheorem{proposition}{Proposition}
\newtheorem{claim}{Claim}
\newtheorem{lemma}{Lemma}
\newtheorem{theorem}{Theorem}

\title{Profit Maximization in Bilateral Trade against a Smooth Adversary}

\author{%
  Simone~{Di~Gregorio}\\
  Department of Computer,\\ Control and Management Engineering\\
  Sapienza University of Rome\\
  \texttt{simone.digregorio@uniroma1.it} 
    \And
    Paul~{D\"utting}\\
  Google Research\\
  \texttt{duetting@google.com} 
  \And
    Federico~{Fusco}\\
    Department of Computer,\\ Control and Management Engineering\\
  Sapienza University of Rome\\
    \texttt{federico.fusco@uniroma1.it} \And
    Chris~{Schwiegelshohn}\\
  Department of Computer Science\\ Aarhus University\\
  \texttt{cschwiegelshohn@gmail.com}
}

\begin{document}

\maketitle

\begin{abstract}
    Bilateral trade models the task of intermediating between two strategic agents, a seller and a buyer, who wish to trade a good.  
    We study this problem from the perspective of a profit-maximizing broker within an online learning framework, where the agents' valuations are generated by a smooth adversary. 

    We devise a learning algorithm that guarantees a $\tilde{O}(\sqrt{T})$ regret bound, which is tight in the time horizon $T$ up to poly-logarithmic factors. This matches the minimax rate for the stochastic i.i.d.~case, and is also well separated from the adversarial setting, where sublinear-regret is unattainable. By extending the strong regret guarantees from the i.i.d.~case to the smooth adversary, we significantly broaden the scope of settings where such fast rate 
    is achievable, while closing an important gap in the regret landscape of this fundamental economic problem. 
    
    To overcome the 
    challenges posed by this adversary, we leverage a continuity property of smooth instances and combines this with a hierarchical net-construction of the broker's action space, which is analyzed via algorithmic chaining.
    We showcase the applicability of these techniques by deriving a similarly tight $\tilde{O}(\sqrt{T})$ regret bound for a related mechanism design model: the joint ads problem.
\end{abstract}


\section{Introduction}

    In bilateral trade, a broker intermediates between two agents---a seller and a buyer---who are interested in trading a good \citep{Vickrey61,MyersonS83}. Both agents are characterized by a private valuation for the item, and their goal is to maximize their own utility. The broker, on the other hand, aims at maximizing some economic metric, 
    while ensuring that (i) agents report truthfully their valuations (\emph{dominant-strategy incentive compatibility}) and (ii) the utility for participating in the mechanism of each agent is non-negative (\emph{individual rationality}). 

    In this paper, we focus on \emph{profit-maximization}, a canonical objective in economics which naturally arises in the applications (e.g., all e-commerce platforms that connect sellers to buyers). In particular, we investigate profit maximization in bilateral trade through the lens of online learning, with the goal of designing an algorithm that minimizes the regret with respect to the best mechanism in hindsight. For this problem, \citet{DiGregorioDFS25} provides two surprising results: while the best mechanism can be learned \emph{very} quickly in the stochastic setting where the agents valuations are drawn i.i.d. from an unknown but fixed distribution, \emph{no algorithm} can achieve sublinear regret in the standard adversarial setting. This leaves open the natural question of whether it is possible to perform learning \emph{beyond} the stationary i.i.d. setting. Following a consolidated approach rooted in \emph{beyond-worst-case-analysis} \citep{Roughgarden20}, we investigate the learnability of the problem against the $\sigma$-smooth adversary. Such an adversary can choose different distributions over valuations at different time steps, but is constrained to distributions that are ``not too peaked'' (see \Cref{def:smoothness}).

    A characterizing feature of our learning task is the complexity of the broker action space. A mechanism for bilateral trade is specified by two functions---an allocation and a payment rule---mapping bids to allocation and payments. Standard mechanism design arguments (see Appendix~\ref{app:bilateral_trade}) allow for a simple geometric characterization of the generic $M \in \cM$. Its allocation rule corresponds to a subdivision of the bids space into two regions: one in which the trade happens and one where this is not the case. This partition has to be closed in the ``up-left'' direction, so that increasing the buyer's bid or decreasing the seller's does not decrease the likelihood of observing a trade. The payments are then given by suitable projection of the valuation pair on the boundary of the allocation region, à la \citet{Myerson81}. While this characterization collapses the learning problem to tracking the best allocation region, the mechanism space is still non-parametric! This separates us from the majority of the online learning literature on mechanism design, where the learning objective is intrinsically parametric, e.g., the best price(s) \citep{KleinbergL03,Cesa-BianchiCCF24mor,Cesa-BianchiCCF24jmlr}, the best reserve price(s) \citep{Cesa-BianchiGM15,RoughgardenW19}, or the best bid \citep{FengPS18,Cesa-BianchiCCF24}.

    \subsection{Our Result}

        Our first contribution is the design of the learning algorithm \hier, which enjoys a regret rate of $O(\nicefrac{1}\sigma \cdot \sqrt{T} \log T)$ for profit maximization in bilateral trade against a $\sigma$-smooth adversary (\Cref{thm:hierHedge}). This is, up to lower order terms, the same bound achievable in the i.i.d. setting, while no sublinear algorithm can exist in the (non-smooth) adversarial setting \citep{DiGregorioDFS25}.
        
        We then formally prove a reduction from the joint ads problem to profit-maximization in bilateral trade (\Cref{thm:reduction}). Joint ads is a mechanism design setting where two agents \emph{collaborate} to purchase a non-excludable good, 
        and has been studied in the online learning setting in \citet{AggarwalBDF24}. The reduction implies a lower bound of $\Omega(\sqrt{T})$  on the regret for bilateral trade (\Cref{cor:lower_bilateral}), matching the performance of \hier. Therefore, we characterize the time-horizon dependence of the minimax regret rate against the $\sigma$-smooth adversary, up to poly-log terms. This settles an open question in \citet{DiGregorioDFS25}. Finally, the reduction allows us to adapt \hier to the joint ads setting, maintaining the same (tight) regret rate. This improves on the previous regret upper bound of $O(\nicefrac{1}{\sigma} \cdot T^{\nicefrac 23})$, settling an open question in \citet{AggarwalBDF24} (see \Cref{cor:upper_joint}).

        
        In combination, our results thus show that, from an information-theoretic perspective, the $\sigma$-smooth adversary is ``as easy as'' the i.i.d.~setting. Interestingly, the $\sqrt{T}$ rate algorithms for these two data-generation models are obtained with very different techniques and arguments 
        (see \Cref{subsec:challenges} for more details). 
        The strong statistical learning guarantees for the smooth adversarial setting reinforce the effectiveness of smoothness as a meaningful way of overcoming adversarial impossibility results in economic problems \citep{Cesa-BianchiCCF24mor,Cesa-BianchiCCF24jmlr,Cesa-BianchiCCF24,DurvasulaHZ23}.
        A remaining discrepancy is that the running time of \hier scales as $\exp(\sqrt{T})$, while the i.i.d.~setting admits a poly-time algorithm. Closing this computational gap and tightening the regret dependency on the smoothness parameter $\sigma$ remain the primary open questions in this space.


    \subsection{Technical Challenges}
    \label{subsec:challenges}

    The main challenge of profit-maximization in bilateral trade is the non-parametric nature of the action space, and the special ``projection-like'' nature of the payments, which induce the profit function.


        \xhdr{The Adversary is Non-Stationary.} In the stochastic i.i.d. setting, \citet{DiGregorioDFS25} address the learning problem by proposing a modified version of follow-the-leader, arguing via \emph{probabilistic chaining} \citep{Talagrand14} that (i) a data and distribution dependent \emph{finite} family $\cM'$ of mechanisms has its profit estimated very quickly and (ii) all mechanisms outside $\cM'$ can be ruled out quickly as candidate optimum. Such approach, however, is doomed to fail in any non-stationary setting, like the one we are studying. Indeed, for any mechanism $M$, past profits can be completely uncorrelated with future ones, so that one can never ``rule out'' mechanisms which are suboptimal up to that point.

        \xhdr{Exploiting Lipschitzness.} Smoothness makes the expected-profit function Lipschitz in the choice of the mechanism (for a proper distance). Following the standard approach in the Lipschitz bandits literature \citep{KleinbergSU19}, we can then restrict our attention to a suitable metric cover and run some off-the-shelf algorithm on it. A natural candidate is the family of $\eps$-grid mechanisms $\cM_{\eps}$, whose allocation region is made up of tiles of the uniform $\eps$-grid of the unit square $[0,1]^2$ (see \Cref{fig:approximations}). Any mechanism $M \in \cM$ can be approximated by some $M_{\eps} \in \cM_{\eps}$, so that the overall expected profit of $M$ is at least that of $M_{\eps}$, up to an additive $\nicefrac{\eps}{\sigma}$ term (as in e.g., Lemma 5.1 of \citet{AggarwalBDF24}). Unfortunately, applying directly a no-regret algorithm for adversarial prediction with experts on $\cM_{\eps}$ (e.g., \hedge) yields a suboptimal rate (as $\log |\cM_{\eps}| \approx \nicefrac{1}{\eps}$ , see  \Cref{lem:card_bound}):
        \[
            R_T \lesssim \!\! \underbrace{\nicefrac{\eps}{\sigma} \cdot T}_{{\text{Discretization}}} + \!\!\underbrace{\sqrt{T \log |\cM_{\eps}|}}_{\hedge \text{ guarantee in }\cM_\eps} \lesssim \frac{\eps}{\sigma} T + \sqrt{\frac{T}{\eps}} \approx \frac{1}{\sigma} T^{\nicefrac 23}. \tag{For $\eps \approx T^{-\nicefrac 13}$}
        \]

        \xhdr{Algorithmic Chaining.} To get $\sqrt{T}$, we exploit the specific structure of the mechanism space $\cM$. In particular, we adopt \emph{algorithmic chaining} \citep{Cesa-BianchiGGG17} to mimic \emph{in an adversarial setting} what probabilistic chaining does in the stochastic model. At a high level, the idea is to maintain a hierarchical decomposition of the mechanism space composed by suitable \emph{nets} at different precisions, and to construct an ``hedge-like'' distribution via a random walk on such structure. This allows to get a ``Dudley-type'' bound \citep[e.g., Chapter 8 in][]{VershyninHDP2}, improving on the standard $\sqrt{T \log N}$ expert bound (with $N$ being the number of experts).
         While our approach is inspired by the \hhedge algorithm \citep[Appendix H of][]{Cesa-BianchiGGG17}, there are several important differences and new challenges that we need to overcome. Indeed, algorithmic chaining is an algorithmic \emph{template}, which still needs to be instantiated with the ``right'' choices of losses/rewards and hierarchical decomposition of the action space, and such choice is naturally problem-dependent. In particular, while in \citet{Cesa-BianchiGGG17} the authors require \emph{deterministic} $L^\infty$ net guarantees on the hypothesis class, we know that such nets cannot exist for our problem (see Appendix~\ref{app:fat-shattering} for details). We overcome this issue by providing probabilistic net guarantees, obtained via a novel $L^1$ bound in expectation (\Cref{lem:l1_net}) for $\eps$-grid mechanisms, which relies on smoothness and bilateral trade-specific features. To the best of our knowledge, we are the first to combine algorithmic chaining with smoothness.
        


        
        \xhdr{Joint Ads Reduction.} While \emph{some} {relation}
        between bilateral trade and  
        joint ads is already hinted at
        in previous work \citep{DiGregorioDFS25}, 
        we derive a formal reduction from the latter to the former. 
        This allows us to apply 
        our algorithm to joint ads (implying
        a tight $\tilde{O}(\nicefrac{1}{\sigma}\cdot\sqrt{T})$ regret rate); moreover, 
        we
        extend 
        the $\Omega(\sqrt{T})$ 
        joint ads lower bound of \citet{AggarwalBDF24} to bilateral trade. 

    \subsection{Additional Related Work}

    \xhdr{Repeated Bilateral Trade.} The bilateral trade problem has been extensively investigated in the regret minimization setting when the underlying goal is to maximize welfare (or, equivalently, gain from trade). A list of work \citep{Cesa-BianchiCCF24mor,Cesa-BianchiCCF24jmlr,AzarFF24,BernasconiCCF24,LunghiC026} has characterized the minimax regret rates achievable under various feedback models, data generation models and budget balance constraints (when the goal is welfare, it makes sense to impose the additional constraint that the broker should \emph{not lose money}). Note, in all this literature, the regret benchmark is the best fixed price, as opposed to our rich class of mechanism. Conversely, in profit maximization we need access to the agents valuations to compute payments, while for fixed-price mechanisms it is natural to study less informative feedback structures. We observe that smoothness has already played an important role in this works, as an important tool to avoid pathological ``needle-in-an-haystack'' lower bounds while remaining a meaningful and challenging problem \citep{Cesa-BianchiCCF24mor,Cesa-BianchiCCF24jmlr}.

    \xhdr{Smoothed Online Learning.} In the last ten years, a growing body of work --- starting from \citet{RakhlinST11} --- has been focusing on the smoothness assumption to overcome the common impossibility results from adversarial online learning. Indeed, while general online learnability is characterized via very strong sequential complexity measures like the Littlestone dimension~\citep{shai2009,sequential2015,tightsequential2021}, the smoothness assumption for the adaptive adversary collapses the statistical hardness of online classification and regression back to the well-known VC/fat shattering dimension~\citep{HaghtalabRS20,HaghtalabRS24,BlockDGR22}. Notably, this is not true for multiclass classification with unbounded label space, when facing a slightly stronger smooth adversary~\citep{raman2024}. More recently, \citet{moise2025} and \citet{ermrakhlin2024} enriched and generalized this information-theoretical viewpoint by investigating the smooth online learning setting when the underlying measure is not assumed to be the Lebesgue measure and it is not assumed to be known to the learner, again obtaining VC-dependent regret rates. 
    
    Since our hypothesis class is not arbitrary and our learning objective (i.e. the optimal mechanism) is dependent on the valuation distribution, our setting is not captured by the lower bounds proven in this line of work; this is part of the reason why learning is for us achievable even if the fat shattering dimension is unbounded. Moreover, since the fat shattering dimension is unbounded (see Appendix \ref{app:fat-shattering}), the positive results and associated techniques are mostly inapplicable to our case. However, we do encounter the same information-theoretic separation between smooth and general adversaries, in the sense of smoothness being the enabler of a non-trivial minimax rate.
    
    From a computational perspective, a different but related line of work considers the problem of attaining optimal statistical rates under the smooth assumption while minimizing the time complexity of the algorithm used to attain them, either explicitly --- for certain model specifications~\citep{block2022} --- or implicitly in terms of calls to an empirical risk minimization oracle~\citep{BlockDGR22,haghtalabefficient2022}.
    
    \xhdr{Mechanism Design  and Online Learning.} Modern computational economics and learning theory have a strong intersection, mostly starting from \citet{KleinbergL03}, with a variety of economic problems which have been studied through this lens. For example, auctions~\citep{BlumKRW03,RoughgardenW19,kumarfirstprice,vitercik2022}, market making~\citep{cesabianchimarketmaking2025}, contracts~\citep{icmlpaulcontracts2023}, brokerage~\citep{brokerageroberto2025} and Bayesian persuasion~\citep{persuasion1,persuasion2} were all investigated from the online learning perspective.


\section{Model and Preliminaries}
\label{sec:model}

    We study the online task of learning the profit maximizing mechanism for bilateral trade, when the agents valuations are generated by a smooth adversary. We focus on the family $\cM$ of all the mechanisms which enforce dominant strategy incentive compatibility (DSIC) and ex-post individual rationality (IR). These two standard properties ensure that (i) the seller and the buyer declare their private utility truthfully and (ii) they do not suffer negative utility in participating to the mechanism. We refer to Appendix~\ref{app:bilateral_trade} for 
    more details on the game-theoretic aspects of the problem. 
    
    The seller valuation $\vs$ and the buyer one $\vb$ belong to $[0,1]$, so a mechanism $M$ is characterized by an allocation region $A \subseteq[0,1]^2$, and pricing rules $p ,q:[0,1]^2 \to [0,1]$. 
    The trade happens if and only if the agents' valuations $(\vs,\vb)$ belong to  $A$, and payments are made according to $p$ and $q$. 
    That is, the seller receives $p(\vs,\vb)$, 
    while the buyer pays $q(\vs,\vb)$.
    Without loss of generality, we require that $p(\vs,\vb) = q(\vs,\vb) = 0$ whenever $(\vs,\vb) \not\in A$, i.e., there is no trade. The profit earned by the broker running mechanism $M$ 
    is the difference between the payment the mechanism collects from the buyer and the payment that it makes to the seller:
    \(
        \prof(M,(\vs,\vb)) = q(\vs,\vb) - p(\vs,\vb). 
    \)

    \xhdr{Characterization of DSIC and IR Mechanisms.}
    Standard mechanism design arguments prescribe that any DSIC and IR mechanism has a distinctive structure: the payments are uniquely induced by the allocation region which, in turn, has to respect a monotonicity property (see \Cref{fig:approximation_challenges}).

    \begin{restatable}[Monotone Allocation Region \& Myerson Payments]{definition}{paymentsallocation}
    \label{def:monotone_allocation}
        An allocation region $A\subseteq[0,1]^2$ is monotone  if for any $x=(x_1,x_2) \in A$ and $y=(y_1,y_2) \in [0,1]^2$ with $x_1 \ge y_1$ and $x_2 \le y_2$, it holds that $y \in A$. A mechanism is monotone if its allocation region is monotone. Given a monotone allocation region, then the associated Myerson payments are defined as follows:
        \begin{align*}
            p(\vs,\vb) =& \ind{(\vs,\vb) \in A} \cdot \max\{x \in [0,1] \mid (x,\vb) \in A\} &&\forall\; (\vs,\vb) \in [0,1]^2\\
            q(\vs,\vb) = &\ind{(\vs,\vb) \in A}\cdot \min\{y \in [0,1] \mid (\vs,y) \in A\} &&\forall\; (\vs,\vb) \in [0,1]^2 
        \end{align*}
    \end{restatable}

    In words, for an allocation region to be monotone it should be ``closed'' in a north-west direction. That is, if a point $(\vs,\vb)$ is in $A$ then any point $(\vs',\vb)$ with $\vs' \leq \vs$ and any point $(\vs,\vb')$ with $\vb' \geq \vb$ should be in $A$ as well. The payments of a point $(\vs,\vb) \in A$, in turn, correspond to the ``east'' projection minus the ``south'' one onto the allocation boundary.  For the above definition to be well-posed, we require (without loss of generality) that all allocation regions are topologically closed.

    \begin{restatable}{proposition}{characterization}
    \label{thm:mechanisms}
        A mechanism for bilateral trade is dominant-strategy incentive compatible and individually rational if and only if its allocation region is monotone and uses Myerson payments.
    \end{restatable}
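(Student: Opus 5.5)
The plan is to prove the two directions separately by reducing to the single-parameter Myerson characterization for each agent. Fixing the other agent's report turns bilateral trade into a one-dimensional mechanism design problem for the remaining agent. Throughout, utilities are $u_s = p(\vs,\vb) - \vs\cdot\ind{(\vs,\vb)\in A}$ for the seller and $u_b = \vb\cdot\ind{(\vs,\vb)\in A} - q(\vs,\vb)$ for the buyer. The buyer's report $y$ affects her utility only through $\ind{(\vs,y)\in A}$ and $q(\vs,y)$, and symmetrically for the seller.

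\emph{Sufficiency} (monotone $A$ with Myerson payments is DSIC and IR). Fix $\vs$. By monotonicity the slice $\{y : (\vs,y)\in A\}$ is an up-closed interval, closed because $A$ is closed, so it equals $[t,1]$ with $t=t(\vs)$ (or is empty). The buyer then faces a posted price $t$: reporting any $y\ge t$ yields $\vb - t$, and reporting $y<t$ yields $0$. Truthful reporting therefore gives $\max\{\vb-t,0\}$, which is optimal and nonnegative. This gives DSIC and ex-post IR for the buyer. The seller argument is symmetric: fixing $\vb$, the slice $\{x:(x,\vb)\in A\}$ equals $[0,r]$, and the seller faces a posted price $r$, receiving $r - \vs \ge 0$ exactly when $\vs\le r$.

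\emph{Necessity} (DSIC and IR imply monotone $A$ with Myerson payments). Fix $\vs$ and consider the buyer. The standard exchange argument compares truthful reporting at types $y<y'$. DSIC at $y'$ against deviating to $y$, added to DSIC at $y$ against deviating to $y'$, gives $(y'-y)\big(\ind{(\vs,y')\in A}-\ind{(\vs,y)\in A}\big)\ge 0$. So the allocation is nondecreasing in the buyer's bid, and the seller-side argument makes it nonincreasing in the seller's bid. Combining the two coordinatewise properties gives monotonicity in the sense of \Cref{def:monotone_allocation}: from $x\in A$, first move the seller coordinate down to $y_1$, then the buyer coordinate up to $y_2$.

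It remains to pin down the payments, and I expect this step to need care with boundary and closedness issues. On the slice, DSIC forces the payment $q$ to be constant on the trade interval, since otherwise a buyer in that interval would misreport to the cheaper point. Call this constant $c$, and let $t=\inf$ of the trade interval. IR at types just above $t$ gives $c\le t$. If $c<t$, then a buyer with $\vb\in(c,t)$ who is not allocated gets utility $0$, but she could report some $y\ge t$ and get $\vb-c>0$, contradicting DSIC. Hence $c=t=\min\{y:(\vs,y)\in A\}$, using that $A$ is closed (taken without loss of generality, as in the paper). Payments outside $A$ are zero by the convention stated in the model. The seller's payment $p=\max\{x:(x,\vb)\in A\}$ follows by the symmetric argument. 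One degenerate case needs a separate check: if the trade interval is the single point $\{1\}$, the argument is vacuous at that point, and I would handle it directly through IR and the closedness convention.
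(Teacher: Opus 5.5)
Your proposal is correct and is the standard slice-by-slice Myerson argument that the paper relies on; the paper does not prove the proposition itself and defers to \citet{DiGregorioDFS25}. One small correction to your final remark: a trade interval equal to $\{1\}$ is not degenerate, since IR at $y=1$ gives $c\le 1$ and your deviation argument gives $c\ge 1$. The case where no type lies in $(c,t)$ is instead $t=0$, and there you need $c\ge 0$ from the model's convention $q\in[0,1]$ (symmetrically, $p\le 1$ when the seller's slice is all of $[0,1]$).
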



    \xhdr{Repeated Bilateral Trade.}
    We study the following problem: At each time step $t$, a new seller-buyer pair arrives, with private valuations $v^t = (\vs^t,\vb^t) \in [0,1]^2$. The learner/broker proposes a mechanism $M^t \in \cM$, and the agents bid (truthfully because of the DSIC property). The trade occurs according to $M^t$ and the bids of the agents, which then leave forever. The broker earns $\prof(M^t, v^t)$, the difference between the price paid \emph{by} the buyer and that paid \emph{to} the seller. For simplicity, $\prof_t(\cdot)$ denotes the profit induced by $v^t$ at time $t.$ Formally, we have the following learning protocol:
    
        \begin{algorithm}[h!]
    \begin{algorithmic}[ht]
    \For{time $t=1,2,\ldots$}
            \State a new pair of agents arrives with private valuations $v^t=(\vs^t,\vb^t) \in [0,1]^2$
            \State the learner declares a mechanism $M^t \in \cM$
            \State the trade takes place according to mechanism $M^t$ and the bids
            \State the learner gains $\prof(M^t, v^t) = \prof_t(M^t) \in [-1,1]$
        \EndFor
    \end{algorithmic}
    \caption*{\textbf{The Learning Protocol}}
    \end{algorithm}

    \xhdr{Smooth Distributions.} We assume that the private valuations of the agents are generated by a $\sigma$-smooth adversary. More precisely, an (oblivious) adversary generates up-front a sequence $\cS$ of independent $\sigma$-smooth random variables. A random variable is smooth if it is characterized by a density function that is not ``too peaked'', as formalized in the following definition. 
    \begin{definition}[Smooth Distributions]
    \label{def:smoothness}
        A distribution $\cD$ over $[0,1]^2$ is said to be $\sigma$-smooth, for some $\sigma \in (0,1]$, if the following inequality holds for any Borel set $A$ in $[0,1]^2$:
        \(            \Psub{v \sim \cD}{v \in A} \le \frac{\cL(A)}{\sigma},
        \)
        where $\cL$ denotes the Lebesgue measure. A random variable is $\sigma$-smooth if its distribution is $\sigma$-smooth.
    \end{definition}

    \xhdr{Regret.} The regret suffered by a learning algorithm $\cA$ onto an adversarial sequence of $\sigma$-smooth random variables $\cS$ is defined as 
    \begin{equation}
    \label{eq:regret_bilateral}
        R_T(\cA,\cS) = \sup_{M \in \cM} \sum_{t=1}^{T} \E{\prof(M, v^t) -  \prof(M^t, v^t)},
    \end{equation}
    where the expectation is with respect to the random valuations $v^1, v^2, \dots$ and possibly on the internal randomness of the algorithm. The regret of  $\cA$ is defined by its performance on the worst sequence of $\sigma$-smooth distributions: $R_T(\cA) = \sup_{\cS} R_T(\cA,\cS)$. 
    The typical goal is to achieve sublinear worst-case regret, the algorithm's average profit converges to that of the benchmark, on any input.

\section{The Grid Mechanisms}
\label{sec:grid}
    
    A central ingredient in our algorithmic construction resides on the concept of \emph{grid} mechanisms, i.e., mechanisms whose allocation region can be described in terms of a uniform grid on the $[0,1]^2$ square. While this family is already used in previous work \citep{AggarwalBDF24}, we are the first to prove the $L^1$-net guarantee needed in the chaining analysis. 
    The family of $\eps$-grid mechanisms is parameterized by a precision $\eps$ which we assume, for convenience, to be a power of $2$ ($\eps = 2^{-h}$, for some $h \in \mathbb{N}$). Consider the uniform grid $V_{\eps}$ of step-size $\eps$ of $[0,1]^2$, i.e., the points of the form $(k \eps, j\eps)$ for $k$ and $j$ in $0, \dots, \nicefrac{1}{\eps}$. The grid naturally partitions the $[0,1]^2$ square into squared ``tiles'' of the form $[k \eps, (k+1)\eps] \!\!\times \!\![j \eps, (j+1)\eps]$, we denote with $\cG_{\eps}$ the set of such tiles.
    \begin{definition}[Grid Mechanisms]
        The family of $\eps$-grid mechanisms $\cM_{\eps}\subseteq \cM$ is composed by all the mechanisms whose (monotone) allocation region is given by union of tiles in $\cG_{\eps}$.
    \end{definition}
    We slightly overload the definitions and require all allocation regions of mechanisms in $\cM$ (and thus also of grid mechanisms) to contain the segments $(0,0)-(0,1)$ and $(0,1)-(1,1)$. This technical assumption is compatible with monotonicity and is without loss of generality (by the smoothness assumption, adding zero-measure sets to an allocation region does not affect expected profit). The boundary of the allocation region of a generic $\eps$-grid mechanism is then given by the union of segments joining points in the underlying uniform grid, resulting in a path from $(0, 0)$ to $(1, 1)$. This allows bounding $|\cM_{\eps}|$ via a simple combinatorial argument.

        \begin{restatable}{lemma}{cardinality}
        \label{lem:card_bound}
            The cardinality of $\eps$-grid mechanisms respects ${2^{\nicefrac{1}{\eps}}} \le \lvert \cM_\eps\rvert \leq {(2e)^{\nicefrac{1}{\eps}}}$. 
        \end{restatable}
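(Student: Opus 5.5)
We plan to set up a bijection between $\eps$-grid mechanisms and monotone lattice paths on $V_\eps$, and then count those paths. Let $n = \nicefrac{1}{\eps}$. The allocation region of an $\eps$-grid mechanism is a monotone union of tiles containing the left and top edges of the square. Its boundary (separating $A$ from the rest) is therefore a path along grid segments from $(0,0)$ to $(1,1)$ that uses only unit steps ``right'' (increasing $\vs$) and ``up'' (increasing $\vb$).

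To see this, fix a column $[k\eps,(k+1)\eps]\times[0,1]$. Monotonicity (closure upward in $\vb$) implies that the tiles of $A$ in this column form a top segment, starting at some height $j_k\eps$. Closure leftward in $\vs$ implies $j_k$ is non-decreasing in $k$. Conversely, any non-decreasing sequence $0\le j_0\le j_1\le\dots\le j_{n-1}\le n$ defines a monotone union of tiles; adding the two boundary segments keeps it monotone and closed. So $\cM_\eps$ is in bijection with such sequences. Equivalently, it is in bijection with monotone lattice paths from $(0,0)$ to $(1,1)$ with $n$ right-steps and $n$ up-steps, since the path is determined by the heights $j_k$ at which it crosses each column. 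One must check that different sequences give different allocation regions up to the closure convention, and hence different mechanisms; this holds because the regions differ on a tile of positive area.

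It follows that $|\cM_\eps| = \binom{2n}{n}$. For the upper bound we use $\binom{2n}{n}\le \left(\frac{e\cdot 2n}{n}\right)^n=(2e)^n$. For the lower bound, $\binom{2n}{n}\ge 2^n$; one way is $\binom{2n}{n}=\prod_{i=1}^n\frac{n+i}{i}$ with each factor at least... (more simply, $\binom{2n}{n}\ge \frac{4^n}{2n+1}\ge 2^n$ for all $n\ge1$). Alternatively, one can use a direct injection: each of the $2^n$ paths that, for every $k$, take either ``up then right'' or ``right then up'' inside the diagonal tile $[k\eps,(k+1)\eps]^2$ is distinct and monotone.

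The only step needing care is the bijection, specifically verifying that every non-decreasing height sequence yields a monotone region and that distinct sequences give distinct mechanisms. The counting itself is routine.
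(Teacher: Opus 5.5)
Your proof is correct and essentially the paper's: both identify $\eps$-grid mechanisms with monotone right/up lattice paths from $(0,0)$ to $(1,1)$, count $\binom{2/\eps}{1/\eps}$ of them, and bound this with $(n/k)^k\le\binom{n}{k}\le(en/k)^k$; you additionally verify the bijection via non-decreasing column heights, which the paper only asserts. One small slip: $4^n/(2n+1)\ge 2^n$ fails for $n=1,2$ (i.e., $\eps\in\{1,\nicefrac12\}$), so use one of your other lower-bound arguments instead—the injection of $2^n$ diagonal-tile paths, or the factors $(n+i)/i\ge 2$—or use $\binom{2n}{n}\ge(2n/n)^n=2^n$ as the paper does.
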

        \begin{proof}
        Consider the directed graph $G_{\eps}$ whose nodes are given by the uniform grid $V_{\eps}$, and whose edges are the sides of the tiles of $\cG_{\eps}$, pointing either from left to right (for horizontal edges) or bottom-up (for vertical ones). 
        Each $\eps$-simple mechanism is identified by a path on this graph going from $(0,0)$ to $(1,1)$, which --- together with the left and upper sides of $[0,1]^2$ --- encloses the allocation region. We then only need to count the number of paths of this type. Each path contains exactly $\nicefrac 2 \eps$ edges, equally divided into horizontal and vertical edges. All in all, we have $ |\cM_{\eps}| = \binom{\nicefrac 2 \eps}{\nicefrac 1 \eps} $, so that 
        \[
            2^{\nicefrac{1}{\eps}}\le |\cM_{\eps}| \le \left(2e\right)^{\nicefrac{1}{\eps}},    
        \]
        where we use the inequalities $\left(\nicefrac{n}{k}\right)^{k} \le \binom{n}{k} \le \left(\nicefrac{en}{k}\right)^{k}$. 
        \end{proof}
        
        The smoothness allows to nicely approximate the profit of any mechanism via $\eps$-grid mechanism, via some sort of $\eps$-grid ``inner'' approximation: given a mechanism $M$, just consider the union of all the tiles in $\cG_{\eps}$ which are fully contained in the allocation region of $M$ (see \Cref{fig:approximation_challenges,fig:approximations}).

        \begin{definition}[$\eps$-approximations]
        \label{def:approximations_prof}
        For any precision $\eps>0$ and $M \!\in\! \cM$ with allocation region $A$, we define $\app{\eps}{M} \!\in\! \cM_{\eps}$ as the mechanism whose allocation region is the union of all $P \!\in\! \cG_{\eps}$ s.t. $P \subseteq A$.
        \end{definition}

        If no tile is fully contained in the allocation region of a given mechanism, then the corresponding approximating mechanism is $M_{\emptyset}$, which only allocates in the union of the segments $(0,0)-(0,1)$ and $(0,1)-(1,1)$ (i.e., with zero probability). Clearly, due to the smoothness assumption, such mechanism has $0$ expected profit. For a generic mechanism $M$, the associated $\eps$-approximated mechanism is a good approximation to the expected profit of $M$. We have the following ``$L^1$-bound''. 
        \begin{figure}[t]
            \centering
            \includegraphics[width=0.9\linewidth]{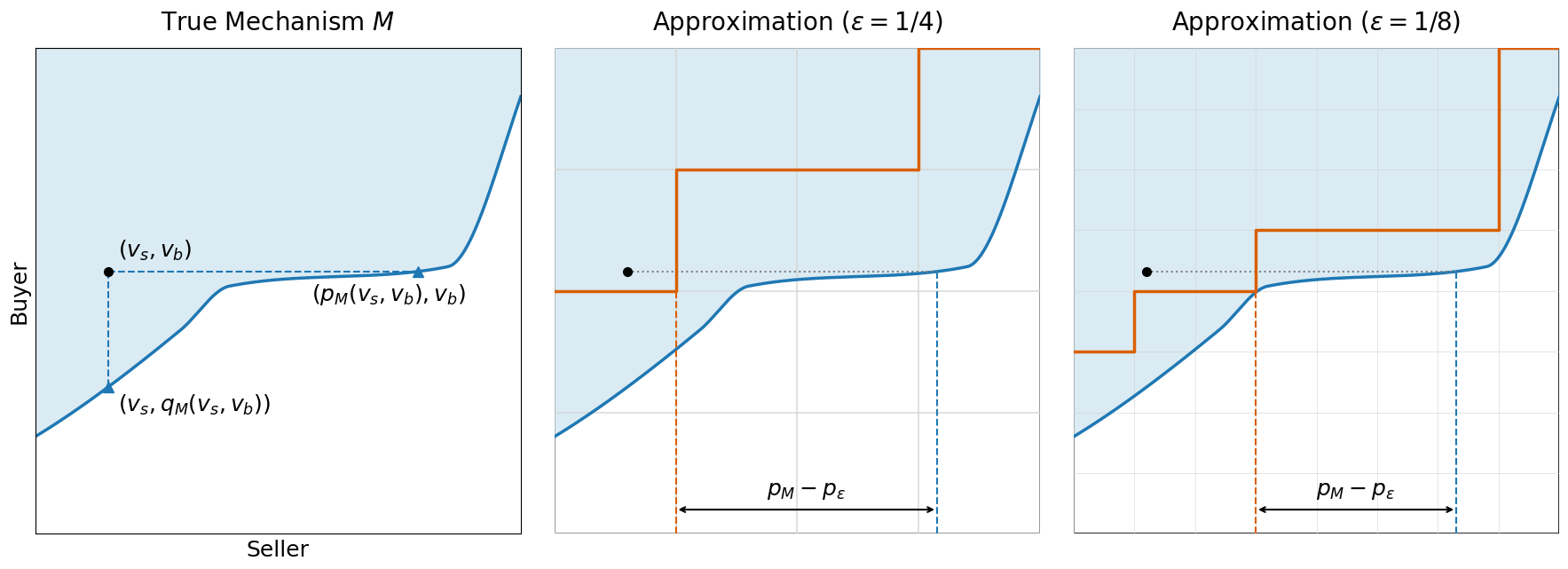}
            \caption{A monotone mechanism and two of its approximations. Notice the seller payment error.}
            \label{fig:approximation_challenges}
        \end{figure}

        \begin{figure}[t!]
        \centering
        \includegraphics[width=\linewidth]{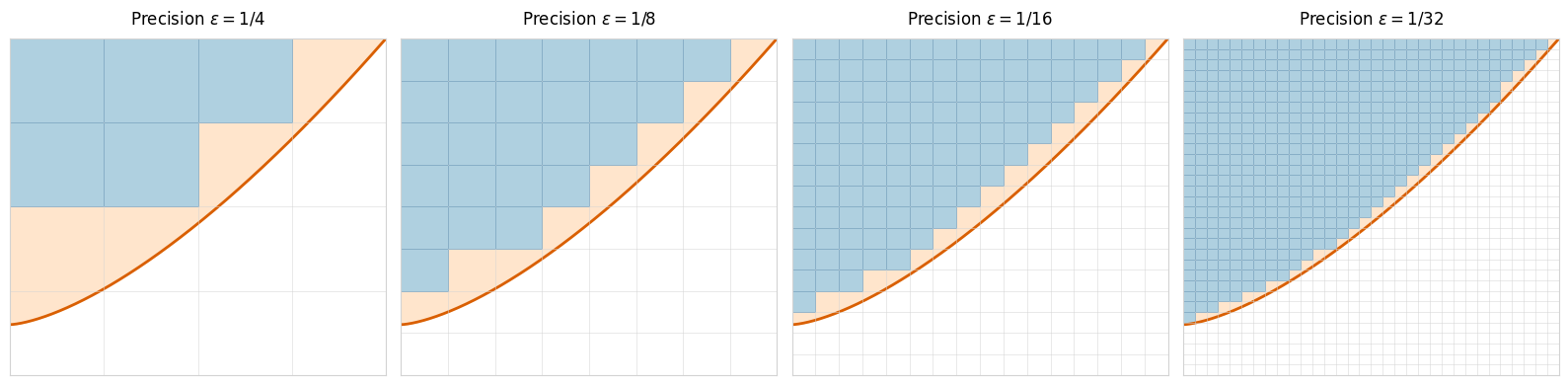}
        \caption{An illustration of the approximating map defined in \Cref{def:approximations_prof}, for different values of $\eps$. The approximation of the allocation region is in blue, the original mechanism has orange boundary.}
        \label{fig:approximations}
        \end{figure}

    \begin{restatable}{lemma}{lonenet}
    \label{lem:l1_net}
    Let $v$ be a $\sigma$-smooth random variable on $[0, 1]^2$. For every $\eps > 0$ we have the following inequality:
    \(
        \E{\lvert\prof(M, v)-\prof(\app{\eps}{M}, v)\rvert} \leq 6 \cdot \nicefrac{\eps}{\sigma}
    \).
    \end{restatable}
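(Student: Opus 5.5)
We set $A$ for the allocation region of $M$ and $A_\eps \subseteq A$ for that of $\app{\eps}{M}$. We write $p,q$ and $p_\eps,q_\eps$ for the corresponding Myerson payments of \Cref{def:monotone_allocation}. The expectation splits according to where $v$ falls. On $A\setminus A_\eps$ the error is $|q(v)-p(v)|\le 1$, so this part contributes at most $\P{v\in A\setminus A_\eps}$. On $A_\eps$ we bound $|q(v)-q_\eps(v)|+|p(v)-p_\eps(v)|$. Outside $A$ both profits vanish. By \Cref{def:smoothness}, every term then reduces to bounding the Lebesgue measure of a suitable set, divided by $\sigma$.

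\textbf{Step 1: the allocation error.} Any point of $A\setminus A_\eps$ lies in a tile that meets $A$ but is not contained in it. Such a tile therefore intersects the boundary curve of $A$. By monotonicity this boundary is a monotone (staircase-like) curve from $(0,0)$ to $(1,1)$. A monotone curve crosses at most $2/\eps$ tiles: walking along it, the column index and the row index are each non-decreasing, and each increases at most $1/\eps$ times. Hence $\cL(A\setminus A_\eps)\le 2\eps^2/\eps = 2\eps$, and this part contributes at most $2\eps/\sigma$.

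\textbf{Step 2: the payment error on $A_\eps$.} For $v=(\vs,\vb)\in A_\eps$ we have $q_\eps(v)\ge q(v)$, because $A_\eps\subseteq A$ makes the southern projection onto $A_\eps$ sit higher. Consider the vertical segment from $(\vs,q(v))$ to $(\vs,q_\eps(v))$. It is contained in $A\setminus A_\eps$, apart from boundary points. The integral over $\vb$ of $q_\eps-q$ is the key quantity. Fix the column $\vs$. For every $\vb\in[q_\eps(\vs,\cdot),1]$ the value $q_\eps-q$ equals the length $\ell(\vs)$ of the vertical slice of $A\setminus A_\eps$ at abscissa $\vs$, which is constant in $\vb$. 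Integrating naively gives $\int \ell(\vs)\,d\vs\cdot 1 = \cL(A\setminus A_\eps)\le 2\eps$. So the integral over $A_\eps$ of $|q-q_\eps|$ is at most $2\eps$. Symmetrically, the integral of $|p-p_\eps|$, using horizontal slices, is at most $2\eps$. Since the density of $v$ is bounded by $1/\sigma$ (smoothness implies this), $\E{|q(v)-q_\eps(v)|\ind{v\in A_\eps}}\le \frac1\sigma\int_{A_\eps}|q-q_\eps|\,d\cL\le 2\eps/\sigma$, and the same holds for $p$.

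Summing the three contributions gives $2\eps/\sigma+2\eps/\sigma+2\eps/\sigma=6\eps/\sigma$, as claimed.

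\textbf{Main obstacle.} The delicate step is Step 2. There we must check that the slice length of $A\setminus A_\eps$ at a given $\vs$ is exactly $q_\eps-q$ for every $\vb$ in $A_\eps$. This requires that $A$ and $A_\eps$ restricted to a vertical line are both upward-closed intervals, which follows from monotonicity. We must also handle the degenerate columns where $A_\eps$ has no tile, namely the forced top and left segments. These have measure zero and are immaterial under smoothness.
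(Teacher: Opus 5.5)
Your proof is correct and gives the same constant $6$. Step 1 is essentially the paper's argument. The payment part, however, takes a genuinely different route.

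\textbf{The paper's route.} It partitions $[0,1]^2$ into columns and rows of width $\eps$. It bounds the buyer-payment error in each column $c$ by a worst-case gap $\Delta_c$ times $\mathbb{P}(v\in c)\le \eps/\sigma$. It then proves $\sum_c \Delta_c\le 2$ (and likewise for rows) in Claim~\ref{cl:bound_sum_error}. That proof is a telescoping, total-variation argument on the rounded step values $Q_c$, and it uses monotonicity and one-sided limits of $q_M$.

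\textbf{Your route.} You observe that on every vertical line both $A$ and $A_\eps$ are upward-closed intervals. Hence, on $A_\eps$ at abscissa $v_s$, the quantity $q_\eps-q$ is exactly the length of the vertical slice of $A\setminus A_\eps$. Fubini then gives $\int_{A_\eps}(q_\eps-q)\,d\mathcal{L}\le \mathcal{L}(A\setminus A_\eps)$, and horizontal slices give the same bound for $p$. The density bound $1/\sigma$, which follows from smoothness via absolute continuity or the layer-cake formula, finishes the argument.

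\textbf{Comparison.} Your argument is shorter and more conceptual. It reduces the whole lemma to the single area estimate $\mathcal{L}(A\setminus A_\eps)\le 2\eps$ from Step 1. In fact it proves something more general: for any monotone allocation region $A'\subseteq A$, the expected profit gap is at most $3\,\mathcal{L}(A\setminus A')/\sigma$, with the grid structure entering only through the area bound. The paper's argument applies smoothness only to rectangles and exploits the specific rounding structure of $\app{\eps}{M}$. The price is the more delicate per-column bookkeeping in Claim~\ref{cl:bound_sum_error}.
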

    \begin{proof}
        Consider a generic mechanism $M$ with allocation region $A_M$ and payments $p_M, q_M$, and its $\eps$-grid approximation $\app{\eps}{M}$, with allocation region $A_{\eps}$ and payments $p_{\eps}, q_{\eps}$. We can decompose the distance in profit between the two mechanisms depending on whether the realized valuation falls inside $A_{\eps}$ or in $A_M \setminus A_{\eps}$. In the latter case, the profit of $\app{\eps}{M}$ is zero outside its allocation region:
        \begin{equation}
        \label{eq:v_notin_Aphi}
            \E{\lvert\prof(M, v)\!-\!\prof(\app{\eps}{M}, v)\rvert\ind{v \in A_M \setminus A_{\eps}}} \!\le\! \P{v \in A_M \setminus A_{\eps}}\!\le\! \frac{2\eps}{\sigma}.
\end{equation}
    By a simple potential argument, there are indeed at most $\nicefrac{2}{\eps}$ tiles in $\cG_{\eps}$ that have non-trivial intersection with $A_M$ (i.e., they intersect with $A_M$ but are not fully contained in it). Moreover, each such tile has probability at most $\nicefrac{\eps^2}\sigma$ by smoothness, and their union is a superset of $A_M\setminus A_{\eps}$, so that $\P{v \in A_M\setminus A_{\eps}} \le \nicefrac{2\eps}{\sigma}$; see \Cref{fig:approximation_challenges}.
         The argument for $A_{\eps}$ is more subtle, as the payments induced by the grid mechanism may be considerably different than those computed by $M$, see again \Cref{fig:approximation_challenges}. Since the profit is composed by the price paid by the buyer, corresponding to the vertical projection, and that paid to the seller, horizontal projection, we partition $[0,1]^2$ into columns $C$ and rows $R$. A column is simply a rectangle of the type
         $(k\eps, (k+1)\eps] \times [0,1]$, while a row has the form $[0,1] \times (j\eps, (j+1)\eps]$. We can associate to each columns (respectively rows), an upper bound on the gap between the price paid by the buyer (respectively to the seller) in the two mechanisms:
        \(
            \Delta_c = \sup_{v \in c \cap A_\eps} \left\{q_{\eps}(v) - q_M(v)\right\}\) and \(\Delta_r = \sup_{v \in r\cap A_\eps} \left\{p_M(v) - p_{\eps}(v)\right\}.
        \)
        We then have the following chain of inequalities, using $p_\eps$ and $q_\eps$ in place of $p_{\app{\eps}{M}}$ and $q_{\app{\eps}{M}}$:
        \begin{align*}
            &\E{\lvert\prof(M, v)-\prof(\app{\eps}{M}, v)\rvert\ind{v \in A_{\eps}}} \\
            &\le \E{\left(|q_{\eps}(v) - q_M(v)| + |p_M(v) - p_{\eps}(v)|\right)\ind{v \in A_{\eps}}} \\
            &\leq \E{\sum_{c \in C}|q_{\eps}(v) - q_M(v)|\ind{v \in A_{\eps} \cap c} + \sum_{r \in R}|p_M(v) - p_{\eps}(v)|\ind{v \in A_{\eps} \cap r}}\\
            &\leq\sum_{c \in C}\Delta_c\P{v \in A_{\eps} \cap c} + \sum_{r \in R}\Delta_r\P{v \in A_{\eps} \cap r} \tag{By def. of $\Delta_r$ and $\Delta_c$}\\
            &\leq \frac{\eps}{\sigma}\left(\sum_{c \in C}\Delta_c + \sum_{r \in R}\Delta_r \right)\tag{By smoothness and $\cL(c)=\cL(r)={\eps}$} \le 4\frac{\eps}{\sigma},
        \end{align*}
        where the last inequality follows by arguing that both $\sum_c \Delta_c$ and $\sum_r \Delta_r$ are upper bounded by $2$. The formal proof is reported in \Cref{cl:bound_sum_error}.
    \end{proof} 
        \begin{claim}
        \label{cl:bound_sum_error}
            It holds that $\sum_{c\in C}\Delta_c \leq 2$ and $\sum_{r\in C}\Delta_r \leq 2$.
        \end{claim}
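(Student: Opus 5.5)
We treat columns; rows are symmetric (swap the roles of the two coordinates and reverse the orientations). The plan is to describe both payment functions in terms of boundary functions of the allocation regions, and then compare these functions column by column.

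First we set up notation. For a monotone closed region $A$, let
\[
  g_A(x)=\min\{y : (x,y)\in A\}.
\]
Monotonicity (closure in the up-left direction) makes $g_A$ nondecreasing in $x$. For $v=(\vs,\vb)\in A$ the buyer payment is $q(v)=g_A(\vs)$. Since $A_\eps\subseteq A_M$ (it is an inner approximation), we have $g_{A_\eps}\ge g_{A_M}$ pointwise. Hence for $v\in c\cap A_\eps$,
\[
  q_\eps(v)-q_M(v)=g_{A_\eps}(\vs)-g_{A_M}(\vs)\ge 0 .
\]
Consequently
\[
  \Delta_c\le \sup_{x\in(k\eps,(k+1)\eps]}\bigl(g_{A_\eps}(x)-g_{A_M}(x)\bigr).
\]
We may restrict to those $x$ for which column $c$ actually meets $A_\eps$. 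Within a column, $g_{A_\eps}$ is constant on the open part of the column; boundary points of measure zero can be ignored, or handled by closedness.

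The key step is a telescoping bound. Fix column $k$ and write $G_k$ for the constant value of $g_{A_\eps}$ on that column. A tile $[k\eps,(k+1)\eps]\times[j\eps,(j+1)\eps]$ lies in $A_M$ precisely when its bottom-right corner $((k+1)\eps, j\eps)$ does, because $A_M$ is monotone. Hence
\[
  G_k=\eps\,\bigl\lceil g_{A_M}((k+1)\eps)/\eps\bigr\rceil \le g_{A_M}((k+1)\eps)+\eps .
\]
Since $g_{A_M}$ is nondecreasing, $\inf_{x\in c} g_{A_M}(x)\ge g_{A_M}(k\eps)$. Combining the two,
\[
  \Delta_c\le g_{A_M}((k+1)\eps)-g_{A_M}(k\eps)+\eps .
\]
Summing over the $1/\eps$ columns, the first two terms telescope to
\[
  g_{A_M}(1)-g_{A_M}(0)\le 1,
\]
and the $\eps$ terms add up to $1$. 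This gives $\sum_c\Delta_c\le 2$.

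For rows we argue in the same way. Let
\[
  f_A(y)=\max\{x:(x,y)\in A\},
\]
which is nondecreasing in $y$, and note that $f_{A_\eps}\le f_{A_M}$. For row $j$, a tile in that row lies in $A_M$ iff its bottom-right corner is in $A_M$. Thus the row value of $f_{A_\eps}$ equals $\eps\lfloor f_{A_M}(j\eps)/\eps\rfloor\ge f_{A_M}(j\eps)-\eps$. Also $\sup_{y\in r} f_{A_M}(y)\le f_{A_M}((j+1)\eps)$. Telescoping as before gives $\sum_r\Delta_r\le 2$.

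The main obstacle is the bookkeeping at the edges. We need care with the half-open columns and rows, with the convention that $A_\eps$ always contains the left and top segments, and with columns or rows where $A_\eps\cap c$ is empty (there $\Delta_c$ is a supremum over an empty set and is taken to be $0$). We also need to make sure that the ceiling/floor identification of the tile boundary is exact, given that the regions are closed. If exactness fails at grid lines, I would instead bound through $g_{A_M}$ evaluated at the right endpoint plus one $\eps$ slack. This gives the same telescoping with the same constant $2$.
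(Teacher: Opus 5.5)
Your proof is correct and takes essentially the paper's approach. You bound each column by a one-step increment plus $\eps$, using the ceiling formula for the grid payment, and then telescope over the $1/\eps$ columns, leaving at most $1 + |C|\eps \le 2$. The one cosmetic difference is that you telescope the original boundary function $g_{A_M}$ at grid points, via $G_k \le g_{A_M}((k+1)\eps)+\eps$, whereas the paper telescopes the grid values $Q_c$, via $Q_{c_-}-\eps \le q_M(k\eps,1)$; you also spell out the row case with the floor, which the paper only calls analogous.
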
    
        \begin{proof}   
        We prove the statement for the columns, the argument for the rows is analogous. 
        Any column $c$ can be written as $I_c \times [0,1]$, where $ I_c = (k\eps, (k+1)\eps]$, for some integer $k$. Provided that the trade happens (i.e. $v\in A_\eps \subseteq A_M$), the payment function $q_M$ is constant in the buyer valuation $v_b$ given the seller's $v_s$, and we focus without loss of generality on $v_b = 1$. By construction of the approximating mechanism, $q_\eps(\cdot,1)$ is constant in $I_c$ as a function of the seller valuation, and is equal to
        $$
        Q_c = \eps\left\lceil \frac{q_M((k+1)\eps, 1)}{\eps}\right\rceil.
        $$
        Notice that $Q_c$ is the height of the lower side of the lowest tile fully contained in both $c$ and $A_M$; if the intersection of the two does not contain any tile, then $Q_c = 1$. At this point, consider $\Delta_c$. Since $q_M$ is weakly increasing as a function of $v_s$ (a consequence of the monotonicity of $A_M$), the gap between $q_{\eps}(v_s, 1)$ and $q_M(v_s, 1)$ over the interval $(k\eps, (k+1)\eps]$ is maximized approaching the infimum of the interval. Therefore, taking the right-sided limit, we get $\Delta_c = Q_c - \lim_{v_s \rightarrow k\eps^+} q_M(v_s, 1)$.

        Assuming it exists, i.e. $k\neq 0$, let us now consider the previous column, denoted as $c_{-}$, corresponding to the interval $I_{c_{-}} = ((k-1)\eps, k\eps]$. Repeating the same analysis, we get $Q_{c_{-}} = \eps\left\lceil \frac{q_M(k\eps, 1)}{\eps}\right\rceil$. By definition of the ceiling function, we have $Q_{c_{-}}-\eps\leq q_M(k\eps, 1)$. Furthermore, by the monotonicity of $q_M$, its value at the boundary is bounded by its right limit: $q_M(k\eps, 1) \leq \lim_{v_s \rightarrow k\eps^+} q_M(v_s, 1)$. Together with the value of $\Delta_c$, we thus get:
        $$
        \Delta_c = Q_c-\lim_{v_s \rightarrow k\eps^+} q_M(v_s, 1)\leq Q_c - q_M(k\eps, 1) \leq Q_c - (Q_{c_{-}}-\eps), 
        $$
        which implies $Q_c-Q_{c_{-}} \geq \Delta_c -\eps$. If $k=0$ and $c_{-}$ does not exist, then $q_{\eps}(0, 1)$ is null by definition, with $Q_c \geq \Delta_c$ being trivially satisfied. 

        To conclude, since $q_{\eps}(\cdot, 1)$ is monotone and its image is a subset of $[0, 1]$, its total variation is bounded by $1$. Therefore, we get, with the convention of letting $Q_{c_{-}} = 0$ when $I_c = (0, \eps]$: 

        $$
        1 \geq \sum_{c\in C} (Q_c-Q_{c_{-}}) \geq \sum_{c\in C}(\Delta_c-\eps),
        $$ 

        which implies, using $\lvert C\rvert \le \nicefrac{1}{\eps}$: 
        \(
        \sum_{c\in C}\Delta_c \leq 2.
        \)
    \end{proof}

        

\section{Hierarchical Mechanism}

    
    In this section, we present our learning algorithm \hier and analyze its regret property. It instantiates multiple copies of the classical \hedge algorithm \citep{FreundS97} as subroutines; so we point to Appendix~\ref{app:hedge} for a quick recap of the algorithm and of its guarantees.

    \begin{algorithm}[t!]
    \begin{algorithmic}[ht]
    \State \textbf{Environment:} Repeated Bilateral Trade against a $\sigma$-smooth adversary on time horizon $T$
    \State Construct the mechanism tree $\cT$ with target precision $H = \nicefrac 12 \lceil  \log_2 T\rceil$
    \For{each level $h<H$ and mechanism $M \in \cM_h$}
        \State initialize a local version of \hedge $\cH^h(M)$, with actions $\cN_M$ and learning rate $\eta_h = \nicefrac{2^h}{\sqrt{T}}$
    \EndFor 
    \For{time $t=1,2,\ldots, T$}
        \For{each level $h<H$ and mechanism $M \in \cM_h$}
            \State Receive the prediction $\cH_t^h(M)$ from the local version of \hedge $\cH^h(M)$
        \EndFor
        \State Propose mechanism $M^t = \target{t}{M_\emptyset}$ and observe valuations $(\vs^t,\vb^t)$ 
        \For{each level $h<H$ and mechanism $M \in \cM_h$}
        \State Feed to $\cH^h(M)$ rewards $\Delta^h_t(M')= \nicefrac{1}{2}\cdot (\prof_t(\target{t}{M'}) - \prof_t(M))\, \forall \, M' \in \cN_M$
    \EndFor 
    \EndFor
    \end{algorithmic}
    \caption*{\hier}
    \end{algorithm}

    \xhdr{Mechanism Tree.} We introduce the mechanism tree $\cT$, whose nodes are the $2^{-h}$-grid mechanisms, for $h=0, \dots, H$ (see also \Cref{fig:hierarchical_tree}). For simplicity, we denote with $\cM_{h}$ the class of $2^{-h}$-grid mechanisms, and with $\app{h}{\cdot}$ (instead of $\app{2^{-h}}{\cdot}$) the generic $2^{-h}$ approximating mechanism in $\cM_h$. The $h$-th level of this tree is given by the nodes in $\cM_h$ (for simplicity, we ignore at all levels the suboptimal mechanism which always allocates, so that $\cM_0$ only contains the mechanism $M_{\emptyset}$ which almost surely never allocates). We specify the edges of $\cT$: there only exists edges going from one level to the next one, connecting the generic mechanism $M \in \cM_h$ with the unique $\app{h-1}{M} \in \cM_{h-1}$. For any node/mechanism $M$, we denote with $\cN_M$ its children. We say that a mechanism is an internal node if it is not a leaf. Note that by construction, for each node $M \in \cM_h$ there exists a unique root-to-node path in $\cT$, namely, $M_{\emptyset} = \app{0}{M}, \app{1}{M}, \dots, \app{h}{M}=M$.

    \xhdr{The algorithm: \hier.} Our algorithm, \hier, maintains a separate instantiation $\cH^h(M)$ of the \hedge algorithm for each internal node $M \in \cM_h$ in the tree, and uses them to decide which mechanism to propose at each time step $t$. We refer to the pseudocode for further details. For each local instantiation of \hedge to be well defined, we need to specify three things: the (a) learning rate, (b) the action space, and (c) the corresponding rewards. (a) As learning rate we use a level-dependent value $\eta_h= \nicefrac{2^{h}}{\sqrt T}$, while (b) the actions of the version of \hedge $\cH^h(M)$ supported on mechanism $M \in \cM_h$ are the children $\cN_M \subseteq \cM_{h+1}$ of $M$ in the mechanism tree $\cT$. Finally, (c) the rewards are defined recursively starting from the leaves and moving up: for each mechanism/children $M' \in \cN_M$, the reward of playing $M'$ is defined as $\Delta^h_t(M')=\nicefrac{1}{2}\cdot(\prof_t(\target{t}{M'}) - \prof_t(M))$, where the target function $\target{t}{\cdot}$ is also defined recursively ($\cH_t^h(M)$ denotes the random action played by the local hedge $\cH^h(M)$ at time $t$): 
        \[
            \begin{cases}
                \target{t}{M}=&M \quad  \text{for a leaf}\\
                \target{t}{M}=&\target{t}{\cH_t^h(M)} \quad \text{otherwise}
             \end{cases}
        \] 
    
\begin{figure}[t]
        \centering
        \includegraphics[width=0.85\linewidth]{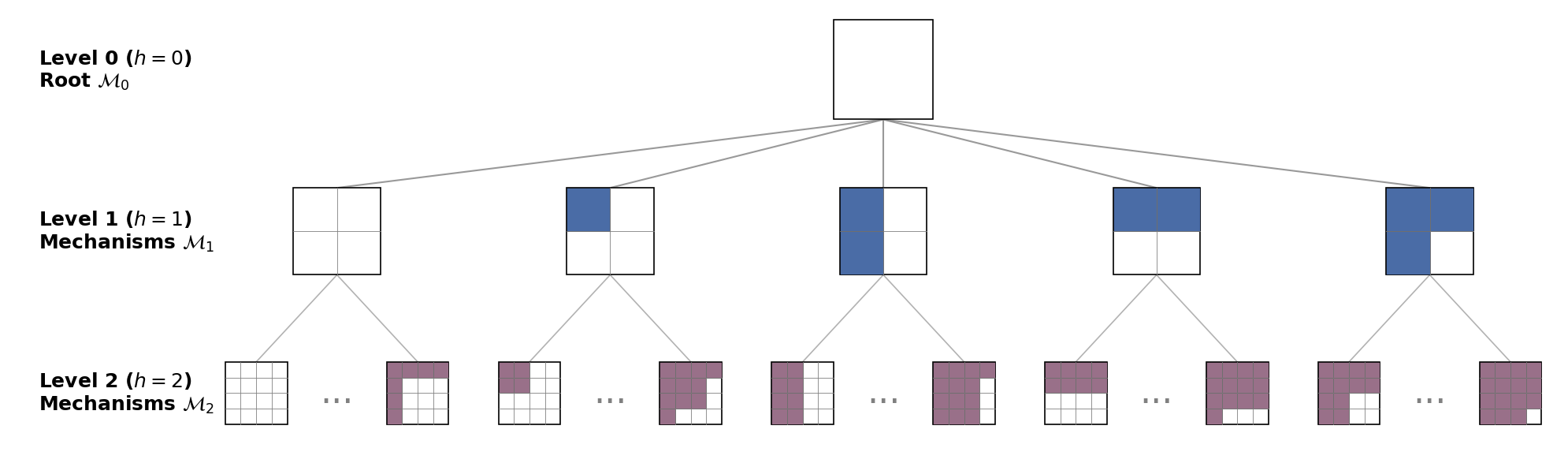}
        \caption{The first levels of the mechanism tree. We are only showing a subset of level $h=2$.}
        \label{fig:hierarchical_tree}
    \end{figure}
    Stated differently, the reward of choosing mechanism $M'$ in the local \hedge version in $M$ is given by the \emph{residual} profit achievable by following the choices of the local \hedge instantiations down the subtree rooted in $M$, up to the leaf $\target{t}{M'}$.
    Given all these local \hedge, the learner actually proposes to the agents the mechanism $M^t \!\!=\!\! \target{t}{M_{\emptyset}}$. Namely, at each time step, it follows the root-to-leaf path sampled by the local instantiations of \hedge, and chooses the resulting leaf mechanism. Once the pair $(\vs^t,\vb^t)$ is revealed, then \hier can update the internal state of each local \hedge.

        \paragraph{The analysis of \hier.} We denote with $M^t$ the random mechanism played by the algorithm at the generic time $t$, and with $\best$ the best mechanism on the instance\footnote{Although it is possible to prove with some work that such $\max$ is indeed attained, one can alternatively define $\best$ as any mechanism whose expected profit is optimal up to some small constant term, with this not changing the analysis.}. We use $\bestapp{h}$ to denote the generic approximating mechanism $\app{h}{\best}$, and with $\cH^{h}$ (instead of $\cH^h(\bestapp{h})$) the corresponding version of \hedge. In the analysis of \hier we only need to consider these local versions of \hedge, one for each level of the tree, whose properties are summarized in the following lemma. 
        \begin{lemma}[Local Regret]
        \label{lem:local}
        For each level $h < H$, we have the following bound on the regret of the version of \hedge $\cH^h$ at that level, where expectations are taken with respect to the algorithm randomization and the distribution of the valuations at each time step:
        \[
            \sum_{t=1}^T \E{\Delta^h_t({\bestapp{h+1}}) - \Delta^h_t(\cH_t^h) } \le \frac{7}{\sigma} \sqrt{T}.            \]            
        \end{lemma}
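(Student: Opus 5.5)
We will apply the second-order (variance-dependent) regret bound for \hedge recalled in Appendix~\ref{app:hedge}. For rewards in $[-1,1]$ and learning rate $\eta \le 1$, it gives
\[
    \sum_{t} \E{r_t(a^\star) - r_t(a_t)} \le \frac{\ln N}{\eta} + \eta \sum_t \E{\sum_a \pi_t(a)\, r_t(a)^2},
\]
where $\pi_t$ is the \hedge distribution, $a_t \sim \pi_t$, and the comparator $a^\star$ is any fixed action. The rewards $\Delta^h_t(M') = \tfrac12(\prof_t(\target{t}{M'}) - \prof_t(\bestapp{h}))$ lie in $[-1,1]$, because profits lie in $[-1,1]$. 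The leaf choices inside $\target{t}{M'}$ are made by other \hedge instances, independently of $v^t$ given the past. We apply the bound to $\cH^h$ with comparator $\bestapp{h+1} \in \cN_{\bestapp{h}}$ and $\eta_h = 2^h/\sqrt T \le 1$, since $h < H$.

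\textbf{First term.} The action set satisfies $|\cN_{\bestapp{h}}| \le |\cM_{h+1}| \le (2e)^{2^{h+1}}$ by \Cref{lem:card_bound}. Hence
\[
    \frac{\ln N}{\eta_h} \le \frac{2^{h+1}\ln(2e)\sqrt T}{2^h} = 2\ln(2e)\sqrt T \le 3.4\sqrt T.
\]
The level-dependent learning rate is chosen precisely so that this term is level-independent.

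\textbf{Second term.} Here we must show
\[
    \E{\Delta^h_t(M')^2} \lesssim \frac{2^{-h}}{\sigma} \quad \text{for each child } M' \in \cN_{\bestapp{h}}.
\]
This is the main obstacle. Fix $M' \in \cN_{\bestapp{h}}$ and write $L = \target{t}{M'}$, a leaf in the subtree of $M'$. Every node on the path from $M'$ down to $L$ is an ancestor of $L$, so $\app{h}{L} = \app{h}{M'} = \bestapp{h}$. This uses that $\app{h}{\app{h+1}{M}} = \app{h}{M}$, which holds because tiles are nested dyadically. Since $|\Delta| \le 1$,
\[
    \Delta^2 \le |\Delta| \le \tfrac12\, |\prof_t(L) - \prof_t(\app{h}{L})|.
\]
The leaf $L$ is random, but it is determined by the algorithm's internal randomness and past valuations, hence independent of $v^t$. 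Conditioning on $L$ and applying \Cref{lem:l1_net} with $\eps = 2^{-h}$ gives
\[
    \E{\Delta^h_t(M')^2} \le \frac{3 \cdot 2^{-h}}{\sigma}.
\]
Averaging over $\pi_t$ and multiplying by $\eta_h$ gives $3/(\sigma\sqrt T)$ per round. Summing over $T$ rounds yields $3\sqrt T/\sigma$.

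\textbf{Combining.} The two terms give $3.4\sqrt T + 3\sqrt T/\sigma \le 7\sqrt T/\sigma$, using $\sigma \le 1$.

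One technical point needs care: that the \hedge bound holds in expectation with rewards that depend on other randomized instances. This is fine because those instances' randomness at time $t$ is independent of $\cH^h$'s draw and of $v^t$. The bound can therefore be applied pathwise to the conditional expected rewards, or directly, since \hedge's guarantee holds for arbitrary (oblivious-to-current-draw) reward sequences.
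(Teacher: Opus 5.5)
Your proof is correct and follows essentially the same route as the paper: the second-order \hedge bound, \Cref{lem:card_bound} for the $\log|\cN|$ term, the estimate $\Delta^2 \le |\Delta|$ combined with $\app{h}{\target{t}{M'}} = \bestapp{h}$ and \Cref{lem:l1_net}, and the level-dependent learning rate that makes both terms level-independent. You are somewhat more explicit than the paper about two steps it leaves implicit, namely the dyadic nesting $\app{h}{\app{h+1}{M}} = \app{h}{M}$ and the conditioning on the random leaf, which is independent of $v^t$.
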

        \begin{proof}[Proof of \Cref{lem:local}]
            Consider the generic level $h < H$, the \hedge algorithm $\cH^h$ corresponds to node $\bestapp{h}$, and one of its actions is the next element $\bestapp{h+1}$ in the root-to-leaf path to the best mechanism $\best$. For brevity, denote with $\cN_{h+1}$ the children of $\bestapp{h}$.
            \begin{align*}
            \sum_{t=1}^T\E{  \Delta^h_t({\bestapp{h+1}})- \Delta^h_t(\cH_t^h)} \!&\le \!\! \max_{M \in \cN_{h+1}}\!\!  \sum_{t=1}^T\E{ \Delta^h_t(M) - \Delta^h_t(\cH_t^h)} \tag{As $\bestapp{h+1} \in \cN_{h+1}$}\\
            &\le \frac{\log |\cN_{h+1}|}{\eta_h} + \eta_h\cdot \sum_{t=1}^T\E{ \Delta^h_t(\cH_t^h)^2} \tag{\hedge guarantees}\\
            &\le \!4 \frac{ 2^h}{\eta_h}\!+\! \eta_h\cdot\!\sum_{t=1}^T \!\max_{M \in \cN_{h+1}}\E{ |\Delta^h_t(M)|}\tag{\Cref{lem:card_bound}, $|\Delta^h_t(M)|\le 1$}
            \end{align*}
            Note, the second inequality follows by the standard analysis of the \hedge algorithm; for completeness, we report a formal proof in \Cref{app:hedge}). In the third inequality, we are using that the distribution of \hedge at time $t$ is independent of the valuations at time $t$. Now, let $M\in \cM_h$ and recall the definition of $\Delta^h_t(M') = \nicefrac{1}{2}\cdot(\prof_t(\target{t}{M'}) - \prof_t(M))$ for any $M' \in \cN_M$. By construction, $\target{t}{M'}$ is a leaf in the sub-tree rooted in $M$, therefore $\app{h}{\target{t}{M'}} = M$. 
            This implies that we can apply \Cref{lem:l1_net} and upper bound $\max_{M \in \cN_{h+1}}\E{ |\Delta^h_t(M)|}$ with $\nicefrac{3}{\sigma} \cdot 2^{-h}$.
            Plugging in this inequality in the previous display yields: 
            \[
            \sum_{t=1}^T\E{ \Delta^h_t({\bestapp{h+1}}) - \Delta^h_t(\cH_t^h)} \le 4 \cdot\frac{2^h}{\eta_h}+ 3\cdot \frac{\eta_h}{\sigma}2^{-h} \cdot T \le \frac{7}{\sigma} \sqrt{T},  
            \]
            where the last inequality follows by our choice of the level-dependent learning rate. 
        \end{proof}

        The lemma ensures that the local regret suffered along each edge of the root-to-leaf path to $\bestapp{H}$ enjoys the same rate! Our choice of the learning rate manages to find the right trade-off between the number of available mechanisms (higher $h$ corresponds to larger number of children) and the expected magnitude of the residual profit (higher $h$ corresponds to lower magnitude, by \Cref{lem:l1_net}). We then relate the actual profit of \hier with the benchmark $\best$ (more precisely to its approximation $\bestapp{H}$). This is the core of the argument, and is analogous to what happens in probabilistic chaining. 
        \begin{lemma}[Chaining the Profit]
        \label{lem:chaining}
            For each level $0<h \le H$, we have the following regret bound:
            \(
                \sum_{t=1}^T \E{\prof_t(\target{t}{\bestapp{h}}) - \prof_t(M^t)} \le \frac{14}{\sigma} h \sqrt T.
            \)
        \end{lemma}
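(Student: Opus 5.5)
We argue by induction on $h$, telescoping along the root-to-leaf path $M_{\emptyset}=\bestapp{0},\bestapp{1},\dots,\bestapp{H}$. The key identity comes from the definition of the target map. For every internal node $M\in\cM_h$ we have $\target{t}{M}=\target{t}{\cH^h_t(M)}$. Hence, for the node $\bestapp{h}$ and its local \hedge $\cH^h$,
\[
\prof_t(\target{t}{\bestapp{h+1}})-\prof_t(\target{t}{\bestapp{h}}) = \prof_t(\target{t}{\bestapp{h+1}})-\prof_t(\target{t}{\cH^h_t}) = 2\bigl(\Delta^h_t(\bestapp{h+1})-\Delta^h_t(\cH^h_t)\bigr).
\]
The $\prof_t(\bestapp{h})$ terms inside the two $\Delta^h_t$ cancel in the difference. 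This exact identity is what makes the chaining go through.

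\textbf{Base case and level $h=0$.} For $h=0$ the root is $M_\emptyset=\bestapp{0}$, and by definition $M^t=\target{t}{M_\emptyset}=\target{t}{\bestapp{0}}$. So the quantity $\sum_t\E{\prof_t(\target{t}{\bestapp{0}})-\prof_t(M^t)}$ is identically zero. Applying the identity at level $0$ and summing over $t$ with expectations, \Cref{lem:local} gives
\[
\sum_{t}\E{\prof_t(\target{t}{\bestapp{1}})-\prof_t(M^t)}\le 2\cdot \frac{7}{\sigma}\sqrt T=\frac{14}{\sigma}\sqrt T.
\]
This is the claim for $h=1$.

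\textbf{Inductive step.} Assume the bound $\frac{14}{\sigma}h\sqrt T$ holds for $\bestapp{h}$ with $h<H$. Write
\[
\prof_t(\target{t}{\bestapp{h+1}})-\prof_t(M^t) = \bigl[\prof_t(\target{t}{\bestapp{h+1}})-\prof_t(\target{t}{\bestapp{h}})\bigr] + \bigl[\prof_t(\target{t}{\bestapp{h}})-\prof_t(M^t)\bigr].
\]
Take expectations and sum over $t$. The second bracket is at most $\frac{14}{\sigma}h\sqrt T$ by the induction hypothesis. The first bracket equals twice the local regret of $\cH^h$, which is at most $\frac{14}{\sigma}\sqrt T$ by \Cref{lem:local}. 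Together these give $\frac{14}{\sigma}(h+1)\sqrt T$.

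\textbf{Points to check.} The only delicate point is that the identity is pathwise, holding for each realization of the algorithm's randomness and the valuations. Expectations therefore pass through linearly with no independence issues. The probabilistic subtleties, namely the independence of the \hedge distribution from $v^t$ and the $L^1$ bound of \Cref{lem:l1_net}, are already absorbed into \Cref{lem:local}.

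We must also confirm that $\target{t}{\bestapp{h}}$ is well defined for non-leaf nodes. It is: each internal node is recursively mapped to a leaf through the local \hedge choices. At $h=H$ the node $\bestapp{H}$ is itself a leaf, so $\target{t}{\bestapp{H}}=\bestapp{H}$, which is exactly the form needed downstream.

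Beyond this, I expect no real obstacle. The work is in recognizing the telescoping identity and the factor $2$ coming from the $\nicefrac12$ normalization of the rewards.
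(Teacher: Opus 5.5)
Your proposal is correct and follows essentially the same argument as the paper. Both proofs induct along the path $\bestapp{0},\dots,\bestapp{H}$, using the pathwise identity $\prof_t(\target{t}{\bestapp{h+1}})-\prof_t(\target{t}{\bestapp{h}})=2\bigl(\Delta^h_t(\bestapp{h+1})-\Delta^h_t(\cH^h_t)\bigr)$ together with \Cref{lem:local}. The only difference is cosmetic: you start from the trivial level $h=0$, so you never need the paper's observation that $M_\emptyset$ has zero profit almost surely, because the parent term cancels in the difference anyway.
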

        \begin{proof}
            We prove the result by induction on the level $h \le H$. From \Cref{lem:local} applied at the version of \hedge corresponding to the root, we have: 
            \(
            \sum_{t=1}^T\E{ \Delta^0_t({\bestapp{1}}) - \Delta^0_t(\cH_t^0)} \le \frac{7}{\sigma}\sqrt T.
            \)
            For any $M\in \cN_{M_{\emptyset}}$, $\Delta_t^0(M) = \nicefrac{1}{2}\cdot \prof_t(\target{t}{M})$ almost surely, as the profit of the root mechanism $M_\emptyset$ is almost surely $0$. Moreover, by design of the algorithm, it holds that $2\Delta^0_t(\cH_t^0) = \prof_t(M^t)$, while $2\Delta^0_t({\bestapp{1}})=\prof_t(\target{t}{\bestapp{1}})$, since $\bestapp{1}\in \cN_{M_\emptyset}$. Combining these facts yields the base case:
            \[
                \sum_{t=1}^T\E{\prof_t(\target{t}{\bestapp{1}})- \prof_t(M^t)}\le \frac{14}{\sigma}\sqrt T.
            \]
            Assume now by inductive hypothesis that the statement holds for a generic level $h<H$, it means that, recalling that $\cH^h$ refers to the \hedge instance for node $\bestapp{h}$: 
            \begin{align*}
                \sum_{t=1}^T \E{\prof_t(M^t)} &\ge - \frac{14}{\sigma} h \sqrt T + \sum_{t=1}^T \E{\prof_t(\target{t}{\bestapp{h})} \pm \prof_t(\target{t}{\bestapp{h+1}}) \pm \prof_t(\bestapp{h}))}\\
                &{=} - \frac{14}{\sigma} h \sqrt T + \sum_{t=1}^T \E{\prof_t(\target{t}{\bestapp{h+1})}}
                + 2\sum_{t=1}^T \E{\Delta^h_t(\cH_t^h) - \Delta^h_t(\bestapp{h+1}))}\\
                &\ge - \frac{14}{\sigma} (h+1) \sqrt T + \sum_{t=1}^T \E{\prof_t(\target{t}{\bestapp{h+1})}},
            \end{align*}
            where the equality follows by the definition of $\Delta_t^h$, and the last inequality by \Cref{lem:local}.
        \end{proof}
        \begin{theorem}
        \label{thm:hierHedge}
            Consider the problem of profit maximization in repeated bilateral trade against a $\sigma$-smooth adversary. We have the following regret bound:
            \(
                R_T(\hier) \le \frac{20}\sigma\sqrt{T}\log T.
            \)
        \end{theorem}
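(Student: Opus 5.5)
The plan is to split the regret against $\best$ into a discretization term and a chaining term. Fix the target level $H = \nicefrac12\lceil\log_2 T\rceil$ and write
\[
\sum_{t=1}^T \E{\prof_t(\best) - \prof_t(M^t)} = \sum_{t=1}^T \E{\prof_t(\best) - \prof_t(\bestapp{H})} + \sum_{t=1}^T \E{\prof_t(\bestapp{H}) - \prof_t(M^t)}.
\]
The first sum is the discretization error. The second sum is handled by \Cref{lem:chaining} at level $h=H$.

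For the first term, apply \Cref{lem:l1_net} at each $t$ with $\eps = 2^{-H}$. Each $v^t$ is $\sigma$-smooth, and the lemma holds for any single smooth random variable, so non-stationarity is not an issue. Each summand is then at most $6\cdot 2^{-H}/\sigma$. Since $2^{-H} \le T^{-1/2}$, the total over $T$ rounds is at most $6\sqrt T/\sigma$.

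For the second term, note that $\bestapp{H}\in\cM_H$ is a leaf of $\cT$, so $\target{t}{\bestapp{H}} = \bestapp{H}$ by definition. \Cref{lem:chaining} with $h=H$ then gives $\frac{14}{\sigma} H\sqrt T$. If $H$ is not an integer, one should instead take $H=\lceil \tfrac12\log_2 T\rceil$. This still keeps $2^{-H}\le T^{-1/2}$, and $H \le \tfrac12\log_2 T + 1$.

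Combining the two terms gives
\[
R_T \le \frac{6\sqrt T}{\sigma} + \frac{14}{\sigma}\Bigl(\tfrac12 \log_2 T + 1\Bigr)\sqrt T.
\]
It remains to check that this is at most $\frac{20}{\sigma}\sqrt T\log T$. We have $7\log_2 T \approx 10.1\ln T$, and the constant part $20\sqrt T/\sigma$ is absorbed into roughly $10\ln T\cdot\sqrt T/\sigma$ once $T$ is moderately large. Small $T$ can be handled trivially, since the regret is always at most $2T$ (profits lie in $[-1,1]$).

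Finally, the bound above holds for every fixed smooth sequence $\cS$ and does not depend on $\cS$, so taking the supremum over $\cS$ proves the statement. There is no real obstacle here. The only point needing care is the bookkeeping of $H$'s integrality and the resulting constant, since all the substantive work is done by \Cref{lem:l1_net} and \Cref{lem:chaining}.
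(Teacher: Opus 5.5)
Your proposal is correct and follows the paper's proof: you invoke \Cref{lem:chaining} at the leaf level $H$ (where $\target{t}{\bestapp{H}}=\bestapp{H}$), and you pay the discretization cost $\frac{6}{\sigma}2^{-H}T\le\frac{6}{\sigma}\sqrt T$ via \Cref{lem:l1_net}. Your extra bookkeeping (taking $H=\lceil\frac12\log_2 T\rceil$ so that $H$ is an integer, and checking the constant $20$) is more careful than the paper's one-line conclusion, but the $2T$ fallback only covers $T\ge 2$: at $T=1$ the right-hand side is $0$, while \hier plays $M_\emptyset$ and can incur positive regret, so the statement implicitly assumes $T\ge 2$.
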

        \begin{proof}
            Recall that $\bestapp{H}$ is the approximating mechanism with precision $2^{-H}$ of the optimal mechanism $\best$. We have the following chain of inequalities, which concludes the proof as $H = \nicefrac{1}{2}\lceil \log_2 T\rceil$:
            \begin{align*}
                \frac{14}{\sigma} H \sqrt T &\ge \sum_{t=1}^T \E{\prof_t(\target{t}{\bestapp{H}}) - \prof_t(M^t)} \tag{By \Cref{lem:chaining} on the last level $H$}\\
                 &=\sum_{t=1}^T \E{\prof_t({\bestapp{H}}) - \prof_t(M^t) \pm \prof_t(\best)} \tag{As $\bestapp{H}$ is a leaf}\\
                 &\ge \sum_{t=1}^T \E{\prof_t({\best}) - \prof_t(M^t)} - \frac{6}{\sigma}\cdot 2^{-H} T\tag{By \Cref{lem:l1_net}} \\
                 & \geq \sum_{t=1}^T \E{\prof_t({\best}) - \prof_t(M^t)} - \frac{6}{\sigma}\cdot \sqrt{T}.\qedhere
            \end{align*}
        \end{proof}


\section{A Reduction from Joint Ads to Bilateral Trade}

    In this section, we provide an high-level intuition behind our reduction from joint ads to profit maximization in bilateral trade. We refer to Appendix~\ref{subsec:joint_ads} for a definition of the joint ads model, while the formal proof of the reduction is in Appendix~\ref{app:reduction}.

    \xhdr{Joint ads.}
    Here, two buyers cooperate in an auction for a non-excludable good: either they both get it, or none does. The mechanism designer wants to maximize the profit/revenue extracted from the trade, while enforcing DSIC and IR constraints. The main difference from bilateral trade is the ``direction of monotonicity'': allocation regions need to be \enquote{north-east} closed, while payments are given by the \enquote{south} and \enquote{west} projections\footnote{Recall: in bilateral trade, the monotonicity is in the north-west direction.}. For a visualization, we refer to the second picture of \Cref{fig:reduction}. The learning protocol is analogous to bilateral trade. The agents are characterized by private valuations $(v^t_1, v^t_2)$, drawn from a $\sigma$-smooth distribution. The learner proposes mechanism $M^t$ with allocation $A^t$ and two payments $p^t_{1}$ and $p^t_2$, the trade happens if $(v^t_1,v^t_2)\in A^t$ and the revenue is:
            \(        
                \rev_t(M^t) = p^t_1(v_1,v_2) + p^t_2(v_1,v_2).
            \)
    While the two problems share some similarities, in bilateral trade we may have negative $\prof$ for some trades, while this cannot happen for joint ads. In both problems, the trade {not} happening results in $0$ profit/revenue, thus introducing an asymmetry.

    \xhdr{Our Reduction.} To overcome this issue, we restrict the bilateral trade instance to $[0,\nicefrac 12]\times[\nicefrac{1}{2},1]$, so that no trade results in \emph{negative} payment (as the buyer pays at least $\nicefrac 12$, and the seller receives at most $\nicefrac 12$). Luckily, by monotonicity this restriction goes ``in the right'' direction for the reduction to happen. Given a learning algorithm $\cAbl$ for bilateral trade, we now show how to derive $\cAja$ for joint ads: at time $t$, we query mechanism $\blM^t$ from $\cAbl$, we then \emph{clip} it to the square of interest $[0,\nicefrac 12]\times[\nicefrac 12,1]$ (see first plot in \Cref{fig:reduction}), and apply the {inverse of the} following map $f:(x,y) \to (\nicefrac 12\cdot(1-x),\nicefrac 12\cdot(1+y))$. The resulting allocation region $\jaA^t$ induces the mechanism $\jaM^t$. The joint ads learner proposes  $\jaM^t$ and observes the agents valuations $(v_1^t,v_2^t)$, which are then mapped back to bilateral trade {via $f$}. 
    As corollaries of the reduction, we prove in Appendix~\ref{app:reduction} the following:
    \begin{figure}[t]
    \centering
    \includegraphics[width=0.8\linewidth]{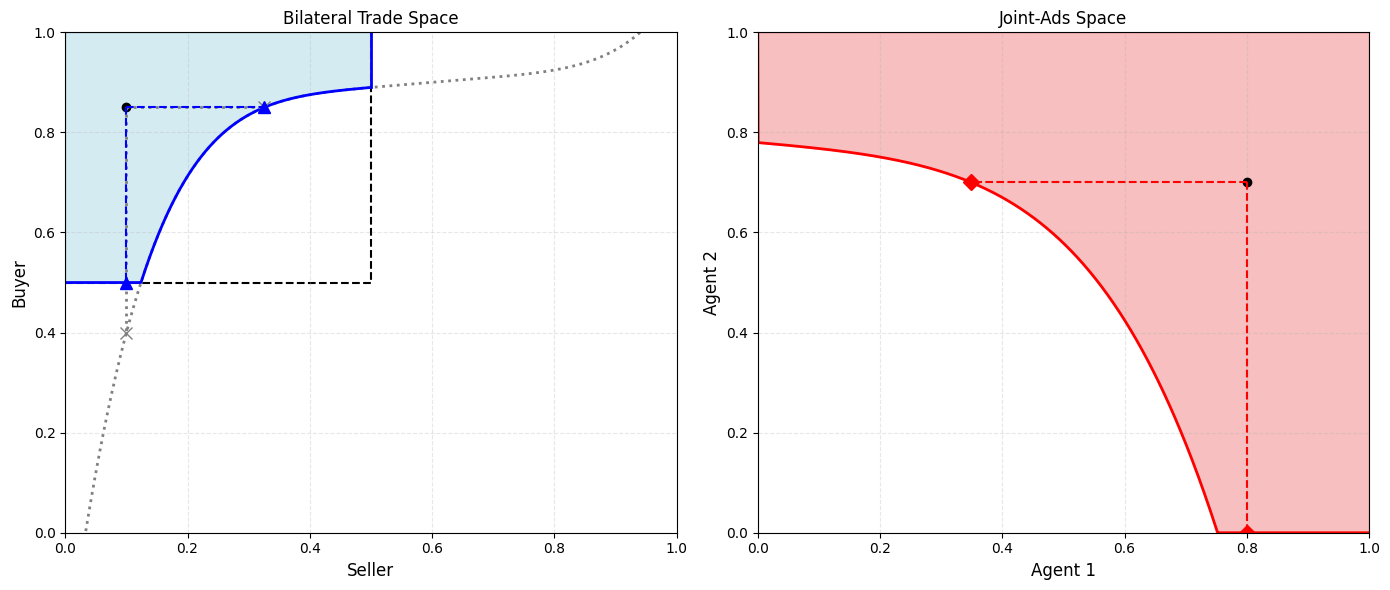}
    \caption{A visualization of our reduction from the joint ads problem.} 
    \label{fig:reduction}
    \end{figure}

    \begin{restatable}[A lower bound for bilateral trade]{corollary}{lowerboundbl}
    \label{cor:lower_bilateral}
        Consider the problem of profit maximization in repeated bilateral trade against a $\sigma$-smooth adversary, for $\sigma \in (0, \nicefrac{1}{48})$. Then, for any learning algorithm $\cA$, the following lower bound holds: 
        \(
        R_T(\cA) \geq \frac{3}{128}\sqrt{T}.
        \)
    \end{restatable}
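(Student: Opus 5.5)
We plan to derive the corollary from the reduction theorem (\Cref{thm:reduction}) combined with the known $\Omega(\sqrt{T})$ lower bound of \citet{AggarwalBDF24} for joint ads against a smooth adversary. We argue by contradiction. Suppose a bilateral trade algorithm $\cAbl$ had regret $R_T(\cAbl) < \frac{3}{128}\sqrt{T}$ against every $\sigma$-smooth adversary with $\sigma < \nicefrac{1}{48}$. Through the reduction we would then build a joint ads algorithm $\cAja$ whose regret is below the joint ads lower bound, which is impossible.

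The first step is to check that smoothness is preserved, with a controlled loss in the parameter. A joint ads instance is a sequence of $\sigma'$-smooth distributions on $[0,1]^2$. The map $f(x,y) = (\nicefrac12(1-x), \nicefrac12(1+y))$ sends it into $[0,\nicefrac12]\times[\nicefrac12,1]$. Since $f$ is affine with Jacobian determinant $\nicefrac14$, for any Borel set $B$ we get $\P{f(v)\in B} = \P{v\in f^{-1}(B)} \le 4\cL(B)/\sigma'$. Hence the pushforward is $\nicefrac{\sigma'}{4}$-smooth. The constant $\nicefrac1{48}$ in the statement should therefore come from taking the smoothness range of the joint ads hard instance (presumably $\sigma' \le \nicefrac1{12}$, or something similar) and dividing by $4$.

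The second step is to show that regrets transfer exactly, or up to a factor. Every monotone bilateral trade mechanism, once clipped to the square $S = [0,\nicefrac12]\times[\nicefrac12,1]$ and pulled back through $f^{-1}$, gives a north-east closed joint ads allocation region. The seller and buyer Myerson projections become the west and south projections, rescaled. For $v\in S$ this yields $\prof(\blM, f(v)) = \frac12 \rev(\jaM, v) + c(v)$, where the affine offsets are absorbed; more precisely, the payment terms $\nicefrac12(1+\cdot)$ and $\nicefrac12(1-\cdot)$ combine into $\nicefrac12$ of the revenue. Conversely, every joint ads mechanism arises this way from a bilateral trade mechanism supported in $S$, and clipping never decreases the profit because every trade inside $S$ has nonnegative profit. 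So the two benchmarks match, and $R_T(\cAja) \le 2 R_T(\cAbl)$, or a similar constant factor. Both directions are supplied by \Cref{thm:reduction}, which we simply invoke.

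The final step is pure arithmetic. We plug in the joint ads lower bound, which has the form $c\sqrt{T}$ for $\sigma'$ in its admissible range, and use the factor from the regret transfer to obtain $R_T(\cAbl) \ge \frac{3}{128}\sqrt{T}$. The main obstacle is the bookkeeping of constants. We must make sure the joint ads lower bound of \citet{AggarwalBDF24} holds for every smoothness parameter in the stated range; smaller $\sigma$ only gives more power to the adversary, so the bound is monotone in $\sigma$. We must also track precisely the scaling factor $\nicefrac12$ between profit and revenue and the factor $4$ on smoothness, so that the constants $\nicefrac{3}{128}$ and $\nicefrac1{48}$ come out as stated.
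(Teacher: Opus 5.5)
Your proposal is correct and follows the paper's own route. The corollary is the contrapositive of \Cref{thm:reduction}: pushing forward through $f$ turns $\sigma'$-smooth into $\nicefrac{\sigma'}{4}$-smooth, and joint-ads regret is at most twice bilateral-trade regret. Combined with the joint-ads lower bound of \citet{AggarwalBDF24}, the stated constants correspond to a $\frac{3}{64}\sqrt{T}$ bound for $\sigma'<\nicefrac{1}{12}$. One minor inaccuracy, harmless since you invoke the theorem anyway: clipping does not help because trades inside the square $[0,\nicefrac12]\times[\nicefrac12,1]$ have nonnegative profit; for an unclipped mechanism they need not. It helps because restricting to $\blA\cap[0,\nicefrac12]\times[\nicefrac12,1]$ can only raise the south projection and lower the east projection, and valuations never leave that square.
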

    \begin{restatable}[An upper bound for joint ads]{corollary}{upperboundja}
    \label{cor:upper_joint}
    Consider the problem of revenue maximization in repeated joint ads against a $\sigma$-smooth adversary. There exists a learning algorithm which suffers regret at most $\frac{160}{\sigma}\sqrt{T}\log T$.
    \end{restatable}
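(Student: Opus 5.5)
We prove \Cref{cor:upper_joint} by running \hier through the reduction from joint ads to bilateral trade sketched above (formalized as \Cref{thm:reduction}). The regret of the resulting joint-ads learner $\cAja$ should be at most a constant multiple of the regret of $\cAbl=\hier$ on a suitably transformed bilateral trade instance. Taking $\cAbl=\hier$ and invoking \Cref{thm:hierHedge} then gives the stated bound, once the constants of the reduction are tracked. The target constant $160 = 20\cdot 8$ suggests two factors: a factor $2$ from rescaling profits, and a factor $4$ from the degradation of the smoothness parameter.

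The first step is the instance transformation. A joint-ads valuation $(v_1,v_2)\in[0,1]^2$ from a $\sigma$-smooth distribution is mapped via $f$ to $(\vs,\vb)=f(v_1,v_2)\in[0,\nicefrac12]\times[\nicefrac12,1]$. Since $f$ is affine with Jacobian determinant of absolute value $\nicefrac14$, the image distribution satisfies $\P{f(v)\in B}\le \cL(f^{-1}(B))/\sigma = 4\cL(B)/\sigma$, so it is $\nicefrac{\sigma}{4}$-smooth on $[0,1]^2$. Independence across rounds is preserved, and the adversary stays oblivious. Hence \Cref{thm:hierHedge} applied with smoothness $\nicefrac\sigma4$ bounds the bilateral regret by $\frac{80}{\sigma}\sqrt T\log T$.

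The second step, which I expect to be the main obstacle, is a pointwise correspondence between profits and revenues. I would show three things.
\begin{itemize}
\item For any monotone bilateral mechanism $M$, clipping its allocation region to the square $[0,\nicefrac12]\times[\nicefrac12,1]$ and pulling back by $f$ yields a north-east closed joint-ads region with Myerson (south/west projection) payments.
\item For every $v$, the profit of the clipped mechanism $M$ at $f(v)$ equals $\nicefrac12$ times the joint-ads revenue at $v$. The horizontal and vertical projections transform under $f$: the seller payment $p=\nicefrac12(1-x)$ and the buyer payment $q=\nicefrac12(1+y)$ give $q-p=\nicefrac12(x+y)$. One must check that the east/south projections in bilateral trade become exactly the west/south projections in joint ads after the reflection $x\mapsto 1-x$; this is where monotonicity directions must be verified carefully, including the boundary conventions that put the segments $(0,0)-(0,1)$ and $(0,1)-(1,1)$ in every region.
\item Clipping does not decrease the bilateral profit on valuations inside the square. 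The restricted support means the east projection can only move left toward $\nicefrac12$ and the south projection only up toward $\nicefrac12$, and trades inside the square are unaffected. Conversely, every joint-ads mechanism arises this way, so the benchmarks coincide.
\end{itemize}

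The third step combines the two. Since revenue is twice the profit, the joint-ads regret of $\cAja$ equals twice the bilateral regret of \hier on the transformed $\nicefrac\sigma4$-smooth instance, where the benchmark supremum over joint-ads mechanisms matches the one over clipped bilateral mechanisms. Clipping the learner's proposed $\blM^t$ only helps the learner, so it can only reduce regret. This gives $2\cdot\frac{80}{\sigma}\sqrt T\log T=\frac{160}{\sigma}\sqrt T\log T$.
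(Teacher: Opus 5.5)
Your proposal is correct and takes the same route as the paper: \Cref{thm:reduction} uses the same map $f$ (giving $\nicefrac{\sigma}{4}$-smoothness), the same clipping inequality $\prof_t(\blM^t)\le \frac12\rev_t(\jaM^t)$ and the benchmark identity $\prof_t(\blM^\star)=\frac12\rev_t(\jaM^\star)$, and combining it with \Cref{thm:hierHedge} gives $2\cdot\frac{80}{\sigma}\sqrt{T}\log T$. Your explicit check that the pulled-back clipped region is north-east closed (hence a valid joint-ads mechanism) is a detail the paper leaves implicit.
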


\section{Conclusion and Future Directions}

    In this paper, we characterize the minimax regret rates for profit maximization in bilateral trade for the $\sigma$-smooth adversary, up to lower order terms.  We achieve this result via an algorithmic chaining approach, for a suitable $L^1$-net construction on the mechanism space. To the best of our knowledge, we are the first to use algorithmic chaining beyond online regression, and we envision it has a vast potential in economically motivated learning problems. We further formalize a reduction from the joint ads problem, a related mechanism design task. This allows us to provide the same tight results also for this new problem in the smooth adversarial setting. 
    We leave open two natural and exciting research questions that stem from our results: understanding the right dependency on the smoothness parameter $\sigma$ in the regret rate, and assessing whether the same rates can be attained {in time polynomial in $T$, as opposed to polynomial in the number of experts maintained.}
    
\section*{Acknowledgments}

    The work of SDG and FF was supported in part by the MUR PRIN grant 2022EKNE5K (Learning in Markets and Society), while CS was supported by a Google Research Award.

\bibliographystyle{plainnat}
\bibliography{references}


\appendix

\newpage
\section[On the Impossibility of a Deterministic Net]{On the Impossibility of a Deterministic $L^{\infty}$-net}
\label{app:fat-shattering}

    In this appendix, we comment on the challenges arising in applying directly the analysis from \citet{Cesa-BianchiGGG17} to our problem. In particular, to apply such results, we would need a finite family of mechanisms $\cM^\infty_\eps$ satisfying the following $L^\infty$-net guarantee: for any mechanism $M \in \cM$ there should exist an $M_{\eps} \in \cM_{\eps}^{\infty}$ such that

\begin{equation}
\label{eq:net_infty}
\lvert \prof(M, v)-\prof(M_{\eps}, v)\rvert \leq \eps   \quad \,\forall\, v\in [0, 1]^2
\end{equation}

This net property is unattainable due to the strongly discontinuous nature of the profit function. For any  $M_{\eps}$ \emph{different than} $M$, the symmetric difference of the allocation regions contains at least one valuation $v$ for which one mechanism extracts possibly non-null profit and the other one misses the trade. 

We have two ways of overcoming this impossibility: either focusing on a still-deterministic but weaker notion of $L^{p}$ bound, or by requiring a less stringent \emph{in-expectation bound} using smoothness. The latter is what we derive in the main body (we restate below the relevant lemma for convenience):

\lonenet*

We devote the rest of the appendix to argue that the first solution proposed, namely a weaker deterministic $L^p$ bound is hopeless (to some extent, this is reminiscent of the non-constructive nets used in \citet{HaghtalabRS24}). Such notion would read as follows. For any fixed choice of $n$ points $S$ in $[0,1]^2$ there exists a sample-dependent $\mathcal{M}_\varepsilon^p(S)$ such that for every $M \in \cM$ there exist an $M_{\eps}^p \in \mathcal{M}_\varepsilon^p(S)$ verifying:
\[
    \left(\frac1n \sum_{v\in S} \lvert \prof(M, v)-\prof(M^p_{\eps}, v)\rvert^p\right)^{\nicefrac 1p}\leq \varepsilon.
\]
Note, this notion is weaker than \Cref{eq:net_infty}. Standard combinatorial arguments tell us that one can always find a finite $\cM_{\eps}^p(S)$ net respecting the above property, whose cardinality is, however, \emph{exponential} in $|S|.$ Such trivial bound does not imply any non-trivial learning guarantees. 

Unfortunately, it turns out that it is impossible to get a sub-exponential net. Indeed, covering/packing duality (see Theorem 12.1 in \citet{bartlett_neural}) combined with the fact that the fat-shattering dimension is unbounded implies 
the non-existence of better-than-exponential $L^p$-nets.


We now prove that the fat-shattering dimension of the problem is unbounded. 
The fat shattering dimension~\citep[e.g. Chapter 11 in][]{bartlett_neural} is a \enquote{scale-sensitive} generalization of the VC dimension from \citet{vapnik1971uniform} from binary classification. It is a crucial tool to \textit{characterize the size of nets} for real-valued functions\footnote{An in-depth discussion about covering and packing numbers is out of scope for this paper, but we refer to \citet{bartlett_neural} for a good overview.}, and it is much more general than the stricter pseudodimension~\citep{pollard1984convergence}, which is unbounded (and thus useless) even for standard classes of functions (like the ones with bounded variation).

\begin{definition}[$\gamma$-fat shattering]
\label{def:fat_shat}
Let $\mathcal F$ be a class of functions mapping from $\mathcal X$ to $\mathbb{R}$, and let $\gamma > 0$. A set of points $S = \{x_1, x_2, \ldots, x_m\} \subset X$ is said to be $\gamma$-fat shattered by $\mathcal F$ if there exists a witness vector $s = (s_1, s_2, \ldots, s_m) \in \mathbb{R}^m$ such that for every Boolean assignment $y \in \{-1, 1\}^m$, there exists a function $f \in \mathcal F$ satisfying, for all $i\in [m]$:

\begin{align*}
    &f(x_i) \ge s_i + \gamma\quad \text{if $y_i = 1$}\\
    &f(x_i) \le s_i - \gamma \quad \text{if $y_i = {-}1$}
\end{align*}

The $\gamma$-fat shattering dimension of $\mathcal F$, $\fat{\mathcal F}{\gamma}$ is the cardinality of the largest $A\subseteq X$ which can be $\gamma$-fat shattered. If no upper bound can be found, the fat shattering dimension is defined to be infinite.
\end{definition}



\begin{proposition}[Unbounded fat shattering]
\label{prop:fat_shatter}
        For any $\gamma \leq \nicefrac{1}{4}$, the $\gamma$-fat shattering dimension of the profit functions in bilateral trade $\mathcal F_{\cM}$ is unbounded.
\end{proposition}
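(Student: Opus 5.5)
To prove the proposition, for every $m\in\mathbb N$ I will exhibit $m$ points that are $\gamma$-fat shattered, with $\gamma=\nicefrac14$ and all witnesses $s_i=\nicefrac14$. Since $\gamma$-fat shattering at scale $\nicefrac14$ implies shattering at every smaller scale, this covers all $\gamma\le\nicefrac14$. The geometry I will exploit is an antidiagonal staircase. I place the points on the segment joining $(0,1)$ to $(\nicefrac12,\nicefrac12)$, namely $x_i=(a_i,1-a_i)$ with $0<a_1<\dots<a_m<\nicefrac14$ (for instance $a_i=\nicefrac{i}{8m}$).

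These points are pairwise incomparable in the north-west order: for $i<j$, point $x_i$ lies to the left of and above $x_j$. Hence for any subset $Y\subseteq[m]$ the region
\[
A_Y=\{(x,y)\in[0,1]^2 : \exists\, i\in Y,\ x\le a_i,\ y\ge 1-a_i\}
\]
(together with the two boundary segments) is monotone and closed. Its intersection with the candidate points is exactly $\{x_i: i\in Y\}$: if $j\notin Y$, then $x_j$ would need some $i\in Y$ with $a_j\le a_i$ and $1-a_j\ge 1-a_i$, which forces $i=j$.

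Next I compute the Myerson payments of $A_Y$ at a point $x_i$ with $i\in Y$. The seller receives $p=\max\{x:(x,1-a_i)\in A_Y\}$. This maximum is attained by the corner $i'\in Y$ with the largest $a_{i'}$ satisfying $1-a_{i'}\le 1-a_i$, i.e.\ $a_{i'}\ge a_i$, so $p\le \max_i a_i<\nicefrac14$. Similarly, the buyer pays $q=\min\{y:(a_i,y)\in A_Y\}\ge 1-a_i>\nicefrac34$. The profit is therefore $q-p>\nicefrac12=s_i+\gamma$. For $j\notin Y$ the profit is $0\le s_j-\gamma$, because no trade happens there.

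Thus for every labelling $y\in\{-1,1\}^m$, taking $Y=\{i:y_i=1\}$ and the corresponding mechanism $M_Y$ (which lies in $\cM$ by \Cref{thm:mechanisms}) witnesses the required pattern. Since $m$ is arbitrary, the $\gamma$-fat shattering dimension is infinite.

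The only delicate step is checking that the staircase union really excludes the unlabelled points and that the payment bounds survive taking the max/min over all corners rather than just corner $i$. I would verify the first via incomparability. For the second, I would simply use the crude uniform bounds: all $a_i<\nicefrac14$, and every point of $A_Y$ in column $a_i$ has height at least $1-\max_i a_i>\nicefrac34$.
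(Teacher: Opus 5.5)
There is a genuine gap: your points form a chain, not an antichain, for the north-west order. You note that for $i<j$ the point $x_i=(a_i,1-a_i)$ lies to the left of and above $x_j$. That is precisely the situation in \Cref{def:monotone_allocation} in which every monotone region containing $x_j$ must also contain $x_i$. The same slip appears in your exclusion check: the conditions $a_j\le a_i$ and $1-a_j\ge 1-a_i$ are one and the same inequality. They are satisfied by every $i\in Y$ with $i\ge j$ and do not force $i=j$.

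Concretely, take $m=2$ and $Y=\{2\}$. The region $A_Y$ (the rectangle $[0,a_2]\times[1-a_2,1]$ plus the boundary segments) contains $x_1$, where $q=1-a_2$ and $p=a_2$. The profit there is $1-2a_2>\nicefrac12>s_1-\gamma=0$, so the labelling $(-1,+1)$ is not realized. In general your regions only produce the $m+1$ prefix patterns $\{x_1,\dots,x_{\max Y}\}$. No monotone region can produce any other trade/no-trade pattern on this set. So a label $-1$ at a point preceding a $+1$ could only come from a traded point with non-positive profit, which your argument does not supply. (Incidentally, $q\ge 1-a_i$ is reversed, since $q=1-\max_{i'\in Y}a_{i'}\le 1-a_i$; your crude bound $q\ge 1-\max_i a_i>\nicefrac34$ is the correct one.)

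The repair is to put the points on a line of positive slope, which is exactly the paper's choice $C=\{(x,y): x=y-\nicefrac12\}$. Take $v_i=(a_i,a_i+\nicefrac12)$ with $0<a_1<\dots<a_m<\nicefrac12$. Now $v_i$ is to the left of and \emph{below} $v_j$ for $i<j$. Hence $v_j\in[0,a_i]\times[a_i+\nicefrac12,1]$ forces $a_i=a_j$, and the union of these rectangles over $i\in Y$ meets the point set exactly in $\{v_i: i\in Y\}$.

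The payments also become exact. The rectangles reaching column $a_i$ have lower edges at height at least $a_i+\nicefrac12$. The rectangles reaching row $a_i+\nicefrac12$ have right edges at abscissa at most $a_i$. So $q=a_i+\nicefrac12$, $p=a_i$, and the profit is exactly $\nicefrac12$. With witnesses $\nicefrac14$, the rest of your argument then goes through unchanged.

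Your antidiagonal staircase is the right picture for the north-east closed regions of joint ads. For bilateral trade, however, the allocation boundary runs from $(0,0)$ to $(1,1)$, so antichains lie along increasing lines.
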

\begin{proof}
    Fix a arbitrary integer $m$ and fix any set $S = \{v_i\}_{i=1}^m \subset C = \left\{(x, y): x = y-\frac{1}{2}\right\}\cap [0, 1]^2$. Any valuation vector in $S$ lies on $C$, meaning that the difference between its entries is always $\nicefrac{1}{2}$.
    
     Let us fix an arbitrary vector of signs $y$: this induces a partition of $S$ into $S'$ and $S\setminus S'$. We can build a mechanism $M'$ and a monotone allocation region $A_{M'}$ such that the corresponding profit function $\prof(M', \cdot) \in \mathcal F_\cM$ assigns $\nicefrac{1}{2}$ to points in $S'$ and $0$ otherwise. More precisely, let $(x_i, y_i)=v_i$ and $R_i = [0, x_i]\times[y_i, 1]$ for any $i\in [m]$; we define $A_{M'}=\cup_{i:\, v_i \in S'} R_i$. 
     
     If we take $s \in \mathbb{R}^m$ to be the constant vector with entries equal to $\nicefrac 14$, we immediately have that $\prof(M', v_i)-s_i\geq \nicefrac{1}{4}$ if $y_i=1$ and $\prof(M', v_i)-s_i\leq -\nicefrac{1}{4}$ if $y_i=-1$. No assumption was made on $y$, so the same holds for any choice of $y$. Therefore, $S$ is $\nicefrac{1}{4}$-fat shattered by $\mathcal F$. The procedure is valid $\forall \, m \in \mathbb{N}$: $\fat{\mathcal F_{\mathcal M}}{\nicefrac{1}{4}}$ is thus unbounded. The claim then follows by noting that $\fat{\mathcal F}{\gamma}$ is monotone decreasing in $\gamma$, for any function class $\mathcal F$.
\end{proof}

\section{Mechanism Design Background}
\label{app:agt}

In this appendix, we recall some of the basic notions from mechanism design about bilateral trade and the joint-ads problem. We focus particularly on the known geometric characterizations of DSIC and IR mechanisms utilized throughout our work.

\subsection{Bilateral Trade}
\label{app:bilateral_trade}

In the static bilateral trade problem, a seller and a buyer interact to exchange a single good. The seller's and buyer's  (private) valuations for the good are $\vs \in [0,1]$ and $\vb \in [0,1]$, respectively. Seller and buyer submit not necessarily truthful bids $\bs \in [0,1]$ and $\bb \in [0,1]$ to a broker running mechanism $M$. A mechanism is characterized by an allocation region $A \subseteq[0,1]^2$, and pricing rules $p ,q:[0,1]^2 \to [0,1]$. 

The trade happens if and only if the bids $(\bs,\bb)$ belong to the allocation region $A$, while the payments are made according to $p$ and $q$. Payments are from the broker to the seller, and from the buyer to the broker: the seller thus receives $p(\bs,\bb)$, while the buyer pays $q(\bs,\bb)$.

The buyer's and seller's utility with valuations $\vb$ and $\vs$ are, when bidding $(\bb, \bs) \in [0,1]^2$ : 
\begin{align*}
\label{eq:utilities}
\util{\bs}{\bb}{s} &= \vs - \ind{(\bs,\bb)\in A} \cdot \vs + p(\bs, \bb)\\ \util{\bs}{\bb}{b} &= \ind{(\bs,\bb)\in A} \cdot \vb - q(\bs, \bb) \notag
\end{align*}
 We consider dominant-strategy incentive compatible (DSIC) and individually rational (IR) mechanisms. With this restriction, each player maximizes his utility with a truthful bid regardless of the other player's bid, and the utility from participating in the mechanism is at least as high as that from not participating in the mechanism. In formulas,
\begin{align*}
    \text{DSIC:}\quad \util{\vs}{\bb}{s} &\geq \util{\bs}{\bb}{s} &&\forall \ \vs \in [0,1], (\bs, \bb) \in [0, 1]^2 \\
    \util{\bs}{\vb}{b} & \geq \util{\bs}{\bb}{b} &&\forall \ \vb \in [0,1], (\bs, \bb) \in [0, 1]^2 \nonumber \\
    \text{IR}:\quad\;\;\; \util{\vs}{\bb}{s} &\geq \vs, \,\, \util{\bs}{\vb}{b} \geq 0 &&\forall \ (\vs, \vb) \in [0, 1]^2, (\bs, \bb) \in [0, 1]^2 \nonumber
\end{align*}

As a result of the aforementioned requirements, we obtain \Cref{thm:mechanisms}. A comprehensive proof of the statement, utilizing standard tools from \citet{Myerson81,MyersonS83}, can be found in \citet{DiGregorioDFS25}.
\paymentsallocation*
\characterization*

As stated in the main body, we denote the family of mechanisms characterized by the above proposition as $\cM$. Notice that the DSIC and IR conditions allow $p(v_s, v_b)\geq 0$ and $q(v_s, v_b)\leq 0$ when there is no trade\footnote{$p(v_s, v_b)< 0$ and $q(v_s, v_b) > 0$ would violate the IR constraint.}, which is more general than \Cref{def:monotone_allocation}. Specifically, the constraints only define payments up to a constant offset. However, considering our objective of profit maximization, having a strictly positive and strictly negative offset for the buyer’s and seller’s payments, respectively, would only deteriorate our objective. Therefore, we can restrict our focus to our definition of payments without any loss of generality.

\subsection{The Joint-Ads Problem}
\label{subsec:joint_ads}

\begin{figure}[!ht]
    \centering
    \includegraphics[width=0.8\linewidth]{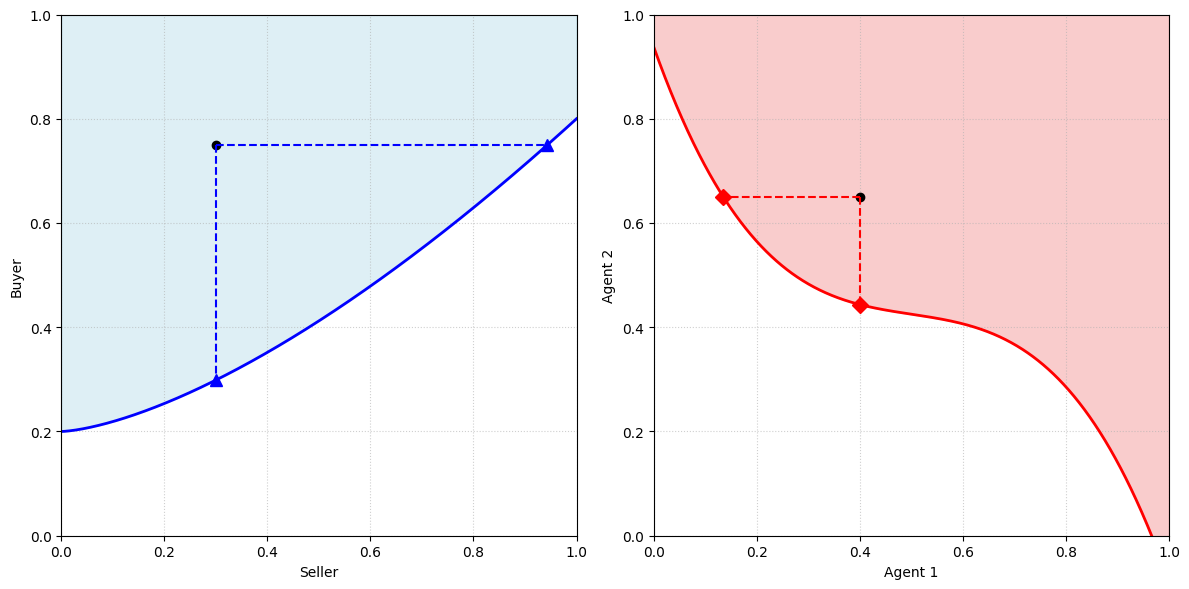}
    \caption{One bilateral trade mechanism (left) and one joint-ads mechanism (right), each with an allocated point and associated payments.}
    \label{fig:ja_bl_mechanisms}
\end{figure}

In the joint-ads problem, two buyers participate in an auction for a non-excludable\footnote{Either both agents get the good, or none of the two gets it.} specific good, and the goal of the mechanism designer is to maximize the revenue extracted from the trade.

This pair of agents is characterized by private valuations $(v_1, v_2)$ in the $[0,1]^2$ square. The learner proposes a mechanism $M$ to the agents, who then declare bids $(b_1,b_2)$, with a mechanism being composed of an allocation region $A\subseteq [0, 1]^2$ and two payment functions $p_1,p_2 : [0,1]^2 \to [0,1]$ that map bids to payments. The trade happens if $(b_1^t,b_2^t)\in A$ and the revenue of mechanism $M$ when the bids are $b_1$ and $b_2$ is defined as
        \[        
            \rev(M,b_1,b_2) = p_1(b_1,b_2) + p_2(b_1,b_2).
        \]

Clearly, the allocation function and the payment function cannot be considered arbitrary. As in the bilateral trade problem, the agents behave strategically, meaning that they strive to maximize their own quasi-linear utility $u_i(b_1,b_2) = v_i \cdot \ind{(b_1,b_2)\in A} - p_i(b_1,b_2)$ for $i \in \{1,2\}$. Note that, differently from the bilateral trade problem, here the two agents are symmetric, as they are both interested in purchasing a good. We require any proposed mechanism $M$ to be DSIC and IR. 

The family of mechanisms $\cM^{JA}$ that respect these two properties can be easily characterized with the same techniques used for bilateral trade (and for any single-parameter mechanism design problem). This tells us that $(i)$ an allocation rule is implementable if and only if it is monotone and $(ii)$ for each such allocation rule there exists a unique Myerson payment rule that makes it a DSIC and IR mechanism. For this reason, we only talk about valuations in the rest of the appendix, as the agents are incentivized to behave truthfully.

Given that the agents are symmetric, the meaning of monotonicity and the definitions of payments are different from the ones in bilateral trade; in joint-ads, an allocation region is monotone if it is closed in the \enquote{north-east} direction, while payments are given by the \enquote{west} and \enquote{south} projection to the boundary of the allocation region. See \Cref{fig:ja_bl_mechanisms} for a visualization.

\begin{definition}[Monotonicity and Payments for Joint-Ads]
\label{def:monotone_allocation_ja}
    An allocation region $A\subseteq[0,1]^2$ is monotone for joint-ads if, for any $x=(x_1,x_2) \in A$ and $y=(y_1,y_2) \in [0,1]^2$ with $x_1 \leq y_1$ and $x_2 \leq y_2$, it holds that $y \in A$. Moreover, for any such monotone region $A$, Myerson payments are defined as follows:
    \begin{align*}
        p_1(v_1,v_2) =& \ind{(v_1,v_2) \in A} \cdot \min\{x \in [0,1] \mid (x,v_2) \in A\} &&\forall\; (v_1,v_2) \in [0,1]^2\\
        p_2(v_1,v_2) = &\ind{(v_1,v_2) \in A}\cdot \min\{y \in [0,1] \mid (v_1,y) \in A\} &&\forall\; (v_1,v_2) \in [0,1]^2 
    \end{align*}
\end{definition}

\section{The Hedge Algorithm} 
\label{app:hedge}

    An important building block of our analysis is the \hedge algorithm. For completeness, in this appendix we provide a brief description of the algorithm and the regret bound it achieves.
    
    Consider the standard prediction with expert advice setting, on action set $A$, with possibly randomized reward functions $r_t: A \to [-1,1]$ generated upfront by an oblivious adversary. The \hedge algorithm with learning rate $\eta$ maintains a weight $w_t(a)$ for each action, defined as $w_t(a) = \exp\left(\eta \cdot \sum_{s=1}^{t-1} r_s(a)\right)$, and  plays actions $a_t$ by sampling proportionally to such weight. More precisely, the probability of choosing action $a$ at time $t$ is equal to $\pi_t(a) = \nicefrac{w_t(a)}{W_t},$ where $W_t = \sum_{a'} w_t(a')$. 
    In the main body we use the following, standard bound, that we reprove here for completeness \citep[for instance, see e.g., Chapter 5]{Slivkins19}. 

    \begin{lemma}
    \label{lem:hedge}
    Consider the standard prediction with expert advice setting. The \hedge algorithm with learning rate $\eta$ enjoys the following regret bound:
    \[
        \max_{a \in A} \sum_{t=1}^T\E{ r_t(a) - r_t(a_t)} \le \frac{\log |A|}{\eta} + \eta\cdot \E{\sum_{t=1}^T r_t(a_t)^2},
    \] 
    where the expectations are only with respect to the sampling of \hedge, with the inequality holding for every realization of the possibly randomized reward functions.
    \end{lemma}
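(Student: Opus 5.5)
We follow the standard exponential-weights potential argument. We apply it for each fixed realization of the reward functions, since these are generated obliviously and \hedge's sampling is independent of them. We track the log-potential $\Phi_t = \frac{1}{\eta}\log W_t$, where $W_1 = |A|$, and compare its total growth with the cumulative reward of any fixed action.

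For the \textbf{lower bound on the potential}, fix any $a \in A$. Since $W_{T+1} \ge w_{T+1}(a) = \exp(\eta \sum_t r_t(a))$, we get
\[
\log \frac{W_{T+1}}{W_1} \ge \eta \sum_{t=1}^T r_t(a) - \log |A|.
\]

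For the \textbf{upper bound on the per-round growth}, we write
\[
\frac{W_{t+1}}{W_t} = \sum_{a} \pi_t(a) e^{\eta r_t(a)}.
\]
We then apply the elementary inequality $e^x \le 1 + x + x^2$, which holds for $x \le 1$. This requires $\eta |r_t(a)| \le 1$, so we need $\eta \le 1$. In the application in \Cref{lem:local} we have $\eta_h = 2^h/\sqrt{T} \le 1$ for $h<H$ and rewards in $[-1,1]$, so the condition holds there; we state it as an implicit assumption. This gives
\[
\frac{W_{t+1}}{W_t} \le 1 + \eta\, \mathbb{E}_{a_t\sim\pi_t}[r_t(a_t)] + \eta^2\, \mathbb{E}_{a_t\sim\pi_t}[r_t(a_t)^2].
\]
Using $\log(1+y) \le y$ and summing over $t$ yields
\[
\log \frac{W_{T+1}}{W_1} \le \eta \sum_t \mathbb{E}[r_t(a_t)] + \eta^2 \sum_t \mathbb{E}[r_t(a_t)^2].
\]

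Combining the two bounds and dividing by $\eta$ gives, for every $a$,
\[
\sum_t r_t(a) - \sum_t \mathbb{E}[r_t(a_t)] \le \frac{\log|A|}{\eta} + \eta\, \mathbb{E}\Big[\sum_t r_t(a_t)^2\Big].
\]
Taking the maximum over $a$ gives the claim. The expectation sits only on $a_t$, because $\pi_t$ depends solely on past rewards, which are fixed in this realization.

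The only delicate step is the quadratic bound on $e^x$. It requires $\eta r_t(a) \le 1$, which is exactly where the constraint on the learning rate enters. We will state it explicitly, noting that any reward scaling used in the paper keeps $\eta_h|\Delta_t^h| \le 1$.
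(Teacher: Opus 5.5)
Your proposal is correct and follows essentially the same potential argument as the paper. Both lower-bound $\log W_{T+1}$ by a single action's weight, bound each ratio $W_{t+1}/W_t$ via $e^x \le 1+x+x^2$ and $\log(1+y)\le y$, then telescope and use $\log W_1=\log|A|$. The only difference is that the paper uses the inequality on the wider range $x<1.79$ and leaves the learning-rate condition implicit, while you state the stronger sufficient assumption $\eta\le 1$, which indeed holds for every $\eta_h$ with $h<H$ and rewards in $[-1,1]$.
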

    \begin{proof}
            The proof follows the standard pipeline, we just need to be a bit more careful about the scale of the rewards. The identity of the actual best action $a^\star$ only depends on the realized reward functions, and not on the actions chosen by the algorithm. Denote with $W_t$ denote the sum of the weights of all the actions at the beginning of time $t$. We have: 
            \begin{align}
                \log(W_{T+1}) &\ge \log(w_{T+1}(a^\star)) \tag{The weights are non-negative}\\
                &= \eta \sum_{t=1}^T r_t(a^\star).\label{eq:W_0}
            \end{align}
            Consider now the generic time step $t$, we have:
            \begin{align}
                \log \frac{W_{t+1}}{W_t} &= \log \sum_{a \in A} \frac{e^{\eta \sum_{s=1}^t r_s(a)}}{W_t} \tag{By definition of $W_{t+1}$}\\
                &= \log \sum_{a \in A} \pi_t(a) e^{\eta \cdot r_t(a)} \tag{By design of the algorithm}\\
                &\le \log \sum_{a \in A} \pi_t(a) \left(1 + \eta \!\cdot \! r_t(a) + \eta^2r_t^2(a)\right) \tag{$e^x \le 1+x+x^2, \, \forall x<1.79$}\\
                &= \log \left(1 + \eta \sum_{a \in A} \pi_t(a) r_t(a) + \eta^2 \sum_{a \in A} \pi_t(a) r^2_t(a)\right) \notag \\
                &\le \eta \sum_{a \in A} \pi_t(a) r_t(a) + \eta^2 \sum_{a \in A} \pi_t(a) r^2_t(a), \label{eq:w_t+1/w_t}
            \end{align}
            where the last inequality follows because $\log (1+x) \le x$ for all $x>-1.$ 
            \begin{align}
                \log \frac{W_{T+1}}{W_1} &=  \log \prod_{t=1}^T \frac{W_{t+1}}{W_t}\tag{Telescopic argument}\\
                &= \sum_{t=1}^T \log \frac{W_{t+1}}{W_t} \notag\\
                &\le \eta \sum_{t=1}^T \left[ \sum_{a \in A} p_t(a) r_t(a) + \eta \sum_{a \in A} p_t(a) r^2_t(a)\right]
                \tag{By \Cref{eq:w_t+1/w_t}} 
            \end{align}
        Noticing that $\log W_1 = \log \lvert A\rvert$, we get the claim.
        \end{proof}

\section{From Joint-Ads to Bilateral Trade: Reduction and Corollaries}
\label{app:reduction}

    In this Appendix, we formally reduce the joint-ads problem to bilateral trade. In particular, in \Cref{thm:reduction} we argue that any learning algorithm for (profit maximization in) bilateral trade  achieving some regret bound $\rho(T,\sigma)$ can be used to derive a learning algorithm for joint-ads suffering $O(\rho(T,\sigma))$ regret. 

    The main ingredient for the reduction is simple: an affine transformation $f:[0,1]^2 \to [0,\nicefrac 12] \times [\nicefrac 12,1] $ defined as follows:
    \[
        f(x,y) = \left(\frac{1-x}{2},\frac{1+y}{2}\right).
    \]
    It is immediate to see that $f$ is bijective from $[0,1]^2$ to $[0,\nicefrac 12] \times [\nicefrac 12,1]$. 

    \begin{theorem}[A Reduction from Joint-Ads to Bilateral Trade] 
    \label{thm:reduction}
    Let $\cAbl$ be a learning algorithm for profit-maximization in bilateral trade against the $\sigma$-smooth adversary which enjoys a regret bound $\rho(\sigma,T)$. Then we can construct an algorithm $\cAja$ for joint-ads against the $\sigma'$-smooth adversary which enjoys a regret bound $O(\rho(\sigma,T))$, where {$\sigma=\nicefrac{\sigma'}{4}$.} 
    \end{theorem}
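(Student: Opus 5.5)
The plan is to build $\cAja$ as a simulation wrapper around $\cAbl$, using the map $f$, and to show three things: the simulated bilateral trade instance is $\sigma$-smooth; the revenue of each joint-ads mechanism is an affine image of the profit of the corresponding clipped bilateral trade mechanism; and every joint-ads comparator corresponds to some bilateral trade comparator with the same value.

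\emph{Smoothness.} Given joint-ads valuations $v^t=(v_1^t,v_2^t)$ drawn from a $\sigma'$-smooth $\cD_t$, feed $\cAbl$ the point $f(v^t)\in[0,\nicefrac12]\times[\nicefrac12,1]$. Since $f$ is affine with $|\det Df|=\nicefrac14$, any Borel $B\subseteq[0,1]^2$ satisfies
\[
\P{f(v^t)\in B}=\P{v^t\in f^{-1}(B\cap f([0,1]^2))}\le \frac{4\cL(B)}{\sigma'}.
\]
So the pushforward is $\nicefrac{\sigma'}{4}=\sigma$-smooth, and the induced adversary is oblivious and independent across rounds. Hence the guarantee $\rho(\sigma,T)$ applies to $\cAbl$ on this instance.

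\emph{Correspondence of mechanisms.} Let $A$ be a monotone bilateral trade region and $A'=A\cap([0,\nicefrac12]\times[\nicefrac12,1])$. I will check that $f^{-1}(A')$ is north-east closed. The first coordinate of $f$ is decreasing, so increasing $v_1$ decreases $\vs$; the second is increasing, so increasing $v_2$ increases $\vb$. North-west closure of $A$ inside the square therefore becomes north-east closure of $f^{-1}(A')$. Conversely, every monotone joint-ads region $J$ is $f^{-1}$ of the monotone bilateral trade region obtained as the north-west closure of $f(J)$ in $[0,1]^2$. Its intersection with the square is exactly $f(J)$, which checks that monotonicity is preserved.

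\emph{Payments.} The east projection in bilateral trade maps to the west projection in joint ads, and the south projection maps to the south projection. For $(v_1,v_2)\in J$,
\[
p(f(v))=\tfrac{1-p_1(v)}2,\qquad q(f(v))=\tfrac{1+p_2(v)}2,
\]
so that
\[
\prof(M,f(v))=\tfrac12\,\rev(M',v)\quad\text{on the allocated set.}
\]
Off the allocated set both quantities vanish. The clipping step is what makes this work. For points of the square, the projection of $A'$ and of $A$ could differ only if the projection left the square. But the east projection of a point with $\vs\le\nicefrac12$ inside $A'$ is capped at $\nicefrac12$, and the projection of $A$ stays comparable. I expect this boundary bookkeeping to be the main obstacle.

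My first attempt at resolving it is to define $\cAja$ so that it plays exactly the mechanism whose revenue equals $2\prof(\blM^t_{\text{clip}},f(v))$, where $\blM^t_{\text{clip}}$ is the mechanism with region $A'$. I then need to show
\[
\prof(\blM^t_{\text{clip}},u)\ge \prof(\blM^t,u)\quad\text{for } u\text{ in the square}.
\]
Restricting to the square only lowers the seller payment, which is a $\min$ with $\nicefrac12$, and only raises the buyer payment, which is a $\max$ with $\nicefrac12$; both effects increase profit, and the allocation on the square is unchanged. Since $f(v)$ always lies in the square, the profits $\cAbl$ actually receives are those of $\blM^t$.

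\emph{Combining.} This gives
\[
\rev_t(\jaM^t)=2\prof(\blM^t_{\text{clip}},f(v^t))\ge 2\prof_t(\blM^t).
\]
For any joint-ads comparator $J^\star$, the bilateral trade comparator described above achieves exactly $\nicefrac12\,\rev$. Therefore
\[
\text{regret}_{\textsc{JA}}\le 2\,\text{regret}_{\textsc{BT}}\le 2\rho(\sigma,T)=O(\rho(\sigma,T)).
\]
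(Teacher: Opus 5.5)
Your proposal is correct and follows the paper's own argument. You feed $f(v^t)$ to $\cAbl$, which faces a $\nicefrac{\sigma'}{4}$-smooth adversary by the Jacobian computation, and you play $f^{-1}$ of the region clipped to $[0,\nicefrac12]\times[\nicefrac12,1]$. Clipping can only lower the seller's payment and raise the buyer's, so $\prof_t(\blM^t)\le\frac12\rev_t(\jaM^t)$. Each joint-ads comparator maps to a bilateral trade mechanism supported in the square with profit exactly $\frac12\rev$, which yields regret at most $2\rho(\sigma,T)$.

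The one small difference is that you handle every joint-ads comparator directly rather than a single optimal $\jaM^\star$. This sidesteps the question of whether the optimum is attained, which the paper handles in a footnote.
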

    \begin{proof}
        Given a learning algorithm $\cAbl$ for bilateral trade, we  show how to derive $\cAja$ for joint ads which enjoys the same regret rate. We construct $\cAja$ by running $\cAbl$ on a suitable instance of bilateral trade. The reduction proceeds in two stages at each time steps. 
        
        At each time $t$, we query mechanism $\blM^t$ from $\cAbl$, and construct a mechanism $\jaM^t$ by \emph{clipping} it to the square of interest $Q=[0,\nicefrac 12]\times[\nicefrac 12,1]$ (see first plot in \Cref{fig:reduction} in the main body), and applying the {inverse of} $f$ defined above. Formally, if $\blA^t$ is the allocation region of $\blM^t$, we define the allocation region $\jaA^t$ of $\jaM^t$ as follows: 
        \begin{equation}
        \label{eq:allocation_reduction}
            \jaA^t = \left\{(x',y')|f(x',y') \in \blA^t  \cap Q \right\}.
        \end{equation}
        Then, we observe the valuation $(v_1^t,v_2^t)$ generated from the joint-ads adversary, and feed $(v_s^t,v_b^t) = {f(v_1^t,v_2^t)}$ 
        back to the bilateral trade algorithm $\cAbl.$ The rest of the proof is devoted to proving that this is indeed a reduction with the desired properties. 
        
        Fix any $\sigma'$-smooth instance generated by the joint ads adversary. Since the bilateral trade algorithm sees the valuations generated by the joint-ads-$\sigma'$-smooth adversary mapped back through {$f$},
        then a simple calculation yields that $\cAbl$ is playing against a $\sigma$-smooth distribution, and thus is guaranteed the following regret bound by hypothesis:
        \begin{equation}
        \label{ineq:bl_regret}
            \sup_{\blM\in \cMbl}\sum_{t=1}^T \E{\prof_t(\blM) - \prof_t(\blM^t)} \le \rho(\sigma,T).
        \end{equation}
        Note, the $\prof_t(\cdot)$ functions compute the profit collected by a bilateral trade mechanism under valuations $(v_s^t,v_b^t)$ constructed as in the reduction.

        The next step is crucial: we need to relate the revenue $\rev_t(\jaM^t)$ extracted by the mechanism $\jaM^t$ in the joint ads instance with the profit $\prof_t(\blM^t)$ extracted by the mechanism $\blM^t$ in the corresponding bilateral trade instance.
        \begin{claim}
        \label{cl:reduction}
            The following inequality is verified: $\prof_t(\blM^t) \le \frac 12\rev_t(\jaM^t)$.
        \end{claim}
        \begin{proof}[Proof of \Cref{cl:reduction}]

        First, we observe that by construction $(v_s^t,v_b^t)$ belongs to $Q$. Therefore $(v_1^t,v_2^t)$ belongs to the allocation region of $\jaM^t$ if and only if the corresponding bilateral-trade-valuation $(v_s^t,v_b^t)$ belongs to the allocation region of $\blM^t$. We are then sure that when the trade does not happen in the bilateral trade instance, the same happens also in the joint ads one, and vice-versa. 
        
        Consider now the case in which both $(v_1^t,v_2^t) \in \jaA^t$ and the corresponding $(v_s^t,v_b^t)$ belongs to $\blA^t$ (and also to $Q$ by construction). We have the following:
        \begin{align}
        \notag
            \prof_t(\blM^t) &= \min\{y \, \mid (v_s^t, y) \in \blA^t  \} - \max\{x \, \mid (x, v_b^t) \in \blA^t  \}\\
        \notag 
            &\le \min\{y \, \mid (v_s^t, y) \in \blA^t \cap Q\} - \max\{x \, \mid (x, v_b^t) \in \blA^t \cap Q  \}\\
            &= \min\{y \, \mid f^{-1}(v_s^t, y) \in \jaA^t\} - \max\{x \, \mid f^{-1}(x, v_b^t) \in \jaA^t\} \tag{As $f$ is bijective}\\
        \notag 
            &= \min\{y \, \mid (v_1^t, 2y-1) \in \jaA^t\} - \max\{x \, \mid (1-2x, v_2^t) \in \jaA^t\}\\
        \notag
            &=\min\left\{\frac{1+y'}{2} \, \mid (v_1^t, y') \in \jaA^t\right\} - \max\left\{\frac{1-x'}2 \, \mid (x', v_2^t) \in \jaA^t\right\}\\
        \notag
            &=\frac 12 \min\left\{y' \, \mid (v_1^t, y') \in \jaA^t\right\} + \frac{1}2 \min\left\{x' \, \mid (x', v_2^t) \in \jaA^t\right\}\\
        \label{ineq:profit_rev}
            &= \frac 12 \rev_t(\jaM^t).
        \end{align}
        This concludes the proof.
        \end{proof}

        Consider the best mechanism $\jaM^\star$ for the $\sigma'$-smooth instance of joint ads\footnote{Also for joint ads one can prove that the $\sup$ in the regret definition is attained, but this proof carries over with minimal modifications by considering some small precision $\eps$ and the corresponding $\eps$-optimal mechanism.}, and denote with $\blM^\star$ the mechanism for bilateral trade constructed analogously to \Cref{eq:allocation_reduction}. We have the following result. 
        \begin{claim}
        \label{cl:benchmark_reduction}
            The following equality holds for any $t$: \(\prof_t(\blM^\star) = \frac 12\rev_t(\jaM^\star)\).
        \end{claim}
        \begin{proof}[Proof of \Cref{cl:benchmark_reduction}]
            The claim follows by exactly the same arguments of \Cref{cl:reduction}. The only difference is that, by construction, the allocation region of $\blM^\star$ is contained in $Q$, therefore the only inequality of \Cref{ineq:profit_rev} holds with equality.
        \end{proof}
        We have all the ingredient to bound the regret of $\cAja$: 
        \begin{align*}
            \sum_{t=1}^T&\E{\rev_t(\jaM^\star) - \rev_t(\jaM^t)} \\
            &\le 2\sum_{t=1}^T\E{\prof_t(\blM^\star) - \prof_t(\blM^t)} \tag{By Claims~\ref{cl:reduction} and \ref{cl:benchmark_reduction}}\\
            &\le {2\rho(\sigma,T),} 
            \tag{By \Cref{ineq:bl_regret}}
        \end{align*}
        which concludes the proof of the theorem.
        \end{proof}

        The reduction has two immediate corollaries. First, we can apply directly the reduction to translate the regret rate of \hier proved in \Cref{thm:hierHedge} to the joint ads problem. 
        \upperboundja*

        As a second corollary, we can apply the reduction in the other direction, and claim that no algorithm for bilateral trade can hope to achieve better than $\Omega(\sqrt{T})$ regret \emph{given} the lower bound of the same rate for joint ads \citep[see Theorem 8 of][]{AggarwalBDF24}.
        \lowerboundbl*

\end{document}